\newcommand{\pushright}[1]{\ifmeasuring@#1\else\omit\hfill$\displaystyle#1$\fi\ignorespaces}
\newcommand{\pushleft}[1]{\ifmeasuring@#1\else\omit$\displaystyle#1$\hfill\fi\ignorespaces}
\newcommand{\lo}[1]{\raisebox{-0.1ex}{$#1$}\,}
\newcommand{\loo}[1]{\raisebox{-0.2ex}{$#1$}\,}
\newcommand{\Lo}[1]{\raisebox{-0.3ex}{$#1$}\,}
\newcommand{\Loo}[1]{\raisebox{-0.4ex}{$#1$}\,}
\newcommand{\LO}[1]{\raisebox{-0.5ex}{$#1$}\,}
\newcommand{\LOO}[1]{\raisebox{-0.6ex}{$#1$}\,}
\newcommand{\R}{\mathbb R}
\newcommand{\C}{\mathbb C}
\newcommand{\N}{\mathbb N}
\newcommand{\T}{\mathbb T}
\newcommand{\eps}{\varepsilon}
\newcommand{\abs}[1]{\lvert #1 \rvert}
\newcommand{\Abs}[1]{\left\lvert #1 \right\rvert}
\newcommand{\D}{\mathrm{d}}
\newcommand{\I}{\mathrm{i}}
\newcommand{\e}{\mathrm{e}}
\newcommand{\Langle}{\left\langle}
\newcommand{\Rangle}{\right\rangle}
\newcommand{\Cb}[1]{C_\mathrm{b}\!\left(#1\right)}
\DeclareDocumentCommand{\EE}{ O{}mO{} }{
	\ifthenelse{\isempty{#3}}{	
		\mathbb E_{#1}\!\left[#2\right]}{
		\mathbb E_{#1}\!\left[#2\,\middle|\,#3\right]}
}
\newcommand{\mc}[1]{\mathcal{#1}}
\newcommand{\mr}[1]{\mathrm{#1}}
\declaretheorem[
	name=Definition, style=definition%
]{defn}
\declaretheorem[
	name=Hypothesis, style=definition
]{hypo}
\declaretheorem[
	name=Proposition, style=plain, sibling=defn
]{prop}
\declaretheorem[
	name=Lemma, style=plain, sibling=defn
]{lem}
\declaretheorem[
	name=Theorem, style=plain, sibling=defn
]{theo}
\declaretheorem[
	name=Corollary, style=plain, sibling=defn
]{coro}
\declaretheorem[
	name=Example, style=remark, sibling=defn
]{ex}
\declaretheorem[
	name=Remark, style=remark, sibling=defn
]{rk}
\newcommand{\propagatorlength}{10mm}
\newcommand{\propagatordistance}{6mm}
\newcommand{\defaultradius}{3mm}
\newcommand{\labelsize}{\scriptsize}
\NewDocumentCommand{\diagram}{m}{
\tikz[
	level 1/.style = 
		{level distance = \propagatorlength,
		sibling distance = \propagatordistance,
},
	every node/.style = 
		{
		minimum size = 6pt, 
		inner sep = 1pt,
		font = \labelsize},
	baseline = -.5ex 
]{#1}
}
\NewDocumentCommand{\coords}{mm}{(#1,#2)}
\NewDocumentCommand{\lloop} 	
	{O{\defaultradius} 
	r() 
	>{\SplitArgument{1}{,}}D(){0,0} 
	m} 
{
	\draw \coords#3 arc (0:360:#1);
	\node[draw, fill = white] (#2) at \coords#3 {$#4$};
}
\NewDocumentCommand{\rloop} 	
	{O{\defaultradius} 
	r() 
	>{\SplitArgument{1}{,}}D(){0,0} 
	O{}} 
{
	\draw \coords#3 arc (-180:180:#1);
	\node[draw, fill = white] (#2) at \coords#3 {$#4$};
}
\NewDocumentCommand{\dloop} 	
	{O{\defaultradius} 
	r() 
	>{\SplitArgument{1}{,}}D(){0,0} 
	m} 
{
	\draw \coords#3 arc (-270:90:#1);
	\node[draw, fill = white] (#2) at \coords#3 {$#4$};
}
\NewDocumentCommand{\uloop} 	
	{O{\defaultradius} 
	r() 
	>{\SplitArgument{1}{,}}D(){0,0} 
	m} 
{
	\draw \coords#3 arc (-90:270:#1);
	\node[draw, fill = white] (#2) at \coords#3 {$#4$};
}
\NewDocumentCommand{\loopdiagram}	
	{D(){\defaultradius}
	m} 
{ 
	\diagram{\uloop[#1](){#2}}
}
\NewDocumentCommand
	{\branchdiagram}	
	{s m O{}}
{\diagram{
	\IfBooleanTF{#1}{
		\branch*()(0,0){#2}[#3]
	}{
		\branch()(0,0){#2}[#3]
	}
}}
\NewDocumentCommand{\branch}
	{s 
	r() 
	>{\SplitArgument{1}{,}}D(){0,0} 
	m 
	>{\SplitArgument{2}{,}}O{}} 
{
	\IfBooleanTF{#1}{
	 	\linesaux{#2}#3{#4}#5
	}{
	 	\ldotslinesaux{#2}#3{#4}#5
	}
}
\NewDocumentCommand{\linesaux} 	
	{m 
	mm 
	m 
	mmm}	
{
	\begin{scope}[
		grow = right 
	]
 	\IfNoValueTF{#6}{
	 	\node[draw] (#1) at (#2,#3) {$#4$}
 		child {node[draw] {} edge from parent node[auto] {$#5$}};
 	}{\IfNoValueTF{#7}{
		 	\node[draw] (#1) at (#2,#3) {$#4$}
 			child {node[draw] {} edge from parent node[auto] {$#5$}}
 			child {node[draw] {} edge from parent node[auto] {$#6$}};
 		}{
		 	\node[draw] (#1) at (#2,#3) {$#4$}
 			child {node[draw] {} edge from parent node[auto] {$#5$}}
 			child {node[draw] {} edge from parent node[auto] {$#6$}}
 			child {node[draw] {} edge from parent node[auto] {$#7$}};
 		}
 	}
	\end{scope}
}
\NewDocumentCommand{\ldotslinesaux}
	{m 
	mm 
	m 
	mmm}	
{
	\begin{scope}[
		grow = right
	]
	\IfNoValueTF{#6}{
	 	\node[draw] (#1) at (#2,#3) {$#4$} 
		child {node[draw] {} edge from parent node[auto] {$#5$}}
		child[white] {node[black] {$\vdots$}};
	}{\IfNoValueTF{#7}{
		 	\node[draw] (#1) at (#2,#3) {$#4$} 
			child {node[draw] {} edge from parent node[auto] {$#5$}}
			child[white] {node[black] {$\vdots$}}
			child {node[draw] {} edge from parent node[auto] {$#6$}};
		}{
		 	\node[draw] (#1) at (#2,#3) {$#4$} 
			child {node[draw] {} edge from parent node[auto] {$#5$}}
			child[white] {node[black] {$\vdots$}}
			child {node[draw] {} edge from parent node[auto] {$#6$}}
			child {node[draw] {} edge from parent node[auto] {$#7$}};
		}
	}
	\end{scope}
}
\NewDocumentCommand{\bouquet}{
s 
>{\SplitArgument{1}{,}}D(){0,0} 
m 
>{\SplitArgument{2}{,}}O{} 
}{
	\IfBooleanTF{#1}
		{\bAux#2{#3}#4}
		{\bAuxDots#2{#3}#4}
}
\NewDocumentCommand{\bAux}{
mm 
m 
mmm 
}{ 
	\draw (#1,#2) arc (-270:90:\defaultradius);
	\IfNoValueTF{#5}{
		\node[draw, fill = white] at (#1,#2) {$#3$}
		child {node[draw] {} edge from parent node[auto] {$#4$}};
	}{\IfNoValueTF{#6}{
			\node[draw, fill = white] at (#1,#2) {$#3$} [counterclockwise from = 0, sibling angle = 90]
			child {node[draw] {} edge from parent node[auto] {$#4$}}
			child {node[draw] {} edge from parent node[auto] {$#5$}};
		}{
			\node[draw, fill = white] at (#1,#2) {$#3$} [counterclockwise from = 0, sibling angle = 45]
			child {node[draw] {} edge from parent node[auto] {$#4$}}
			child {node[draw] {} edge from parent node[auto] {$#5$}}
			child {node[draw] {} edge from parent node[auto] {$#6$}};
		}
	}
}
\NewDocumentCommand{\bAuxDots}{
mm 
m 
mmm 
}{ 
	\draw (#1,#2) arc (-270:90:\defaultradius);
	\IfNoValueTF{#5}{
		\node[draw, fill = white] at (#1,#2) {$#3$} [counterclockwise from = 0, sibling angle = 30]
		child {node[draw] {} edge from parent node[auto] {$#4$}}
		child[white] {node[black] {$\ddots$}};
	}{\IfNoValueTF{#6}{
			\node[draw, fill = white]  at (#1,#2) {$#3$} [counterclockwise from = 0, sibling angle = 45]
			child {node[draw] {} edge from parent node[auto] {$#4$}}
			child[white] {node[black] {$\ddots$}}
			child {node[draw] {} edge from parent node[auto] {$#5$}};
		}{
			\node[draw, fill = white] at (#1,#2) {$#3$} [counterclockwise from = 0, sibling angle = 30]
			child {node[draw] {} edge from parent node[auto] {$#4$}}
			child[white] {node[black] {$\ddots$}}
			child {node[draw] {} edge from parent node[auto] {$#5$}}
			child {node[draw] {} edge from parent node[auto] {$#6$}};
		}
	}
}
\NewDocumentCommand{\bouquetdiagram}{
s 
D(){0,0} 
m 
O{} 
}{
\IfBooleanTF{#1}
{\diagram{\bouquet*(#2){#3}[#4]}}
{\diagram{\bouquet(#2){#3}[#4]}}
}
\NewDocumentCommand\eightdiagram{O{}} 
	{\diagram{\uloop(){#1} \dloop(){#1}}}
\NewDocumentCommand\elldiagram{O{}}
	{\diagram{\branch*(a)(0,-.5*\propagatorlength){#1} \node[draw](b) at (0,.5*\propagatorlength){}; \draw (a)--(b);}}
\NewDocumentCommand\eyediagram{O{}}
	{\diagram{\node[draw](a){$#1$}; \node[draw](b) at (1.5*\propagatorlength,0){}; \draw (a) to[bend right=45] (b); \draw (a) to[bend left=45] (b);}}
\NewDocumentCommand\tadpolediagram{O{}}
	{\diagram{\uloop(a){#1} \node[draw] (b) at (\propagatorlength,0) {}; \draw (a)--(b);}}
\author{
	Rodrigo Vargas Le-Bert\thanks{{\tt vargonis@gmail.com}. Supported by Associazione LumbeLumbe.}
}
\affil{School of Architecture and Design, University of Camerino}
\date{\normalsize\today}
\title{
Wick Ordering and 
Kinetic Energy Renormalization for Lévy White Noise Fields
}
\begin{document}
\maketitle

\begin{abstract}
Let $S=\T^d$ be a torus and $\mu$ the probability distribution of a Lévy white noise field $x:S\rightarrow\R$. Using projective limit measures we address the problem of making sense of $\e^{-T(x)}$, where $T(x) = \int_S \abs{\nabla x(s)}^2\D s$ is the kinetic energy, as a function in $L^1(\mu)$. We start by making sense of $T(x)$ itself as a sort of distribution, which is achieved by a generalization of Wick ordering. Then we specify to the case of a $\Gamma$ field, finding that Wick ordering does not eliminate all divergences. Higher order renormalization would be required, but the model seems to be non-renormalizable. 
\smallskip \\ 
{\bf 2010 MSC}: 
81T08, 
81T16 
(primary); 
60H40, 
82B30 
(secondary).
\end{abstract}

\setcounter{tocdepth}{2}
\tableofcontents

\section{Introduction}

Although
our conceptual understanding of renormalization in quantum and statistical physics has probably reached maturity with the advent of renormalization group methods, the same is certainly not true of our technical tools for dealing with it. From a mathematical perspective, where concepts and techniques are deeply interrelated by their joint evolution, this simply means that renormalization is badly understood. We believe, however, 
that the situation changes substantially once the construction of physically meaningful measures on spaces of fields is understood as a problem in cylinder measure theory. This is actually obvious once we recall that a cylinder measure is a compatible collection of measures on certain finite-dimensional quotients of a given space,  because the ultraviolet and infrarred cutoffs define a quotient of the space of fields; thus, a compatible family of effective theories is, indeed, a cylinder measure. But fully exploiting this connection requires relaxing a bit the standard notion of cylinder measure---something which from a purely mathematical perspective is well-motivated, too, as we proceed to argue.

\subsection{General preliminaries}

Cylinder measures~\cite{schwartz1973radon} are usually taken to be compatible families of measures on the finite-dimensional quotients of a topological vector space. This is not fully satisfactory for technical, practical and conceptual reasons which organize twofold. First, the linear structure is not essential; it is therefore obscuring the essence of the concept and hindering its use in cases where linearity is absent, or technically problematic. Second, in asking for a measure on \emph{all} possible finite-dimensional quotients, it becomes impractical. Indeed, as a general rule, calculations can only be done by commitment to \emph{particular} families of quotients, which become an auxiliary structure playing a role identical to that of coordinate systems in finite-dimensional geometry. Some measures will be tractable in one coordinate system, not in another. Moreover, and this is perhaps the most important point here, particular families of quotients will typically come with a scale parameter, letting the phenomenon of running constants characteristic of renormalization theory become apparent. 
Regarding coordinate systems,
one can use physical space coordinates; momentum space coordinates; and other coordinate systems that, by its own structure, a particular field might lend itself to, such as loop coordinates in gauge theory. 

Projective limit measures~\cite{yamasaki1985measures} address both shortcomings, but have in turn two limitations. First, full projective limits are typically too large. In general, the space of interest, say $X$, will be a subspace of a projective limit $\overline X$, with the projective system itself being auxiliary. Second, projective limits come with a natural topology, but that topology is almost invariably a bad choice for $X$. Thus, by a \emph{cylinder measure} on $X$ we will understand the given of a suitable embedding of $X$ into a projective limit $\overline X$ of finite dimensional spaces, together with a projective measure on $\overline X$; and we emphasize that we do not assume that $X$ inherits its topology from that of~$\overline X$.

Cylinder measures are an inevitable starting point in infinite-dim\-en\-sion\-al measure theory, because in practice one can only integrate functions which depend on a finite number of coordinates---i.e.\ \emph{cylinder functions.} But they are not the whole story, for in infinite dimensions coordinates do not fully specify a space: as already implied above, there is freedom for the topology, which will determine the regularity properties of the resulting measure (in general, cylinder measures are only finitely additive). Now, one might suppose that the space is fully given a priori and then ask under which conditions a cylinder measure on it will actually be Radon; or one might construct the cylinder measure first and look afterwards for a Radonifying operator providing for a satisfactory ambient space. 
This last approach is in line with the shift in focus from space to observables that has been so fruitful in physics and geometry, 
and we choose to adopt it. 
Following it implies some departure regarding the established methods in Euclidean field theory, as we proceed to describe.

Conventionally~\cite{simon2015p}, one starts with a free field represented by a Radon measure on the space of Schwartz distributions, and modifies it via a ``density'' inluding the interaction terms, which can itself be seen as a Hida distribution if one uses the machinery of white noise analysis~\cite{potthoff1993invariant}. The fact that incorporating the interaction requires renormalization shows that Radonifying the reference measure on Schwartz space has not helped: the main issue is still there, and 
it is related to the cylinder measure construction, not its Radonification. Thus, one can benefit from postponing Radonification, because pruning the unessential technicalities will make it easier to focus on the real issues and explore new avenues to solve them---which is what we set out to do here.

\subsection{Strategic overview}

Except for
postponing Radonification and working with a relaxed notion of cylinder measure, 
the strategy that we adopt here, and intend to follow in future work, does not depart from standard practice in general terms.
Getting into the details, though, in this paper we begin exploring an avenue that has been left untouched:
we take a Lévy white noise as a reference measure and attempt to add the kinetic energy as a perturbation. But let us not get ahead of ourselves.

The general problem is, formally, to make sense of a measure of the form
\[
\nu(\D x) = C\exp\left(-\int_S \mc L\bigl(x(s), \nabla x(s),\dots\bigr)\D s\right)\D x,
\]
where $S$ is space-time, $x$ is a field on $S$ (i.e.\ a section of some bundle, typically equipped with some geometric or algebraic structure), $\mc L$ is a Lagrangian density, $\D x$ is some sort of Lebesgue measure on the space of fields $X$, and $C$ is a normalizing constant. The strategy is: 
\begin{enumerate}
	\item Decompose $\mc L = \mc L_\mr{ref} + \mc L_\mr{pert}$ into a \emph{reference} and a \emph{perturbation} part.
	\item Choose a coordinate system: $X\subseteq \overline X = \projlim X^n$, in which the reference measure $\mu(\D x) = C\exp\left(-\int_S \mc L_\mr{ref}\bigl(x(s), \nabla x(s),\dots\bigr)\D s\right)\D x$ can be made sense of as a cylinder measure $\mu = \set{\mu^n}$.
	\item Construct a \emph{cylinder density} $f = \set{f^n}$, i.e.\ a family of functions such that $f\mu = \set{f^n\mu^n}$ is a cylinder measure, which, in a sense to be clarified, can be understood as being $f(x) = \exp\left(-\int_S \mc L_\mr{pert}\bigl(x(s),\nabla x(s),\dots\bigr)\D s\right)$.
	\item Radonify $f\mu$.
\end{enumerate}
Here we are concerned with the first three steps in the case of scalar fields, and we elaborate now on each of them. 

The first two steps are obviously closely interrelated. In the literature, the reference measure is choosen to be the free Gaussian measure, corresponding to $\mc L_\mr{ref}\bigl(x(s),\nabla x(s),\dots\bigr) = \Abs{\nabla x(s)}^2 + m^2\Abs{x(s)}^2$, where $m$ is a mass parameter. For the sake of simplicity, suppose that $S=\T^d$, the $d$-dimensional torus. The free measure diagonalizes in momentum space coordinates: $X\subseteq \projlim X^n$, where $X^n$ is the space of trigonometric polynomials of degree at most $n$. We will make the opposite choice: physical space coordinates, which diagonalize the potential energy part of the Lagrangian. Thus, it is the kinetic energy that will be the perturbation. This is a quite natural possibility when working with generalized cylinder measures. The construction of cylinder measures in physical coordinates leads immediately to the concepts of \emph{continuous product} measures and Lévy white noise. Running parameters already enter the scene, showing how renormalization lies behind the construction of a cylinder measure in almost all cases, except for the trivial instances of discrete products (such as the free field in momentum coordinates when space-time is compact). But truly heavy renormalization occurs only in the third step.

When attempting the construction of the cylinder density $f = \set{f^n}$, locality considerations must be taken into account. The problem is that $f^n$ should be the exponential of an effective perturbation Lagrangian $L_\mr{eff}^n$\,\Loo, in terms of which locality is the existence of a density:
\[
L_\mr{eff}^n(x) = \int_S \mc L_\mr{eff}^n\bigl(x(s),\nabla x(s),\dots\bigr)\D s.
\]
But the compatibility requirement for $\set{f^n}$ is incompatible with the locality requirement for $\Set{L_\mr{eff}^n}$, which must therefore be relaxed---as it should, because strict locality is actually a property that can only be expected to hold in the ultraviolet limit. Thus, we begin making sense of the third step by splitting it in two parts:
\begin{enumerate}[label = (3.\alph*)]
	\item
Choose approximate versions of the perturbation Lagrangian, i.e.\ functions $L_\mr{app}^n$ defined on $X^n$ such that 
\[
L_\mr{app}^n(x^n) \rightarrow \int_S \mc L_\mr{pert}\bigl(x(s),\nabla x(s),\dots\bigr)\D s,\quad x=\set{x^n}\in X\subseteq \projlim X^n.
\]
	\item
Find a family of Lagrangians $\set{L_\mr{ren}^n}$, minimally departing from $\set{L_\mr{app}^n}$ in a sense to be clarified, such that a cylinder density of the form
\[
f^n(x) = C^n\exp\left(-L_\mr{ren}^n\right) + o(1)
\]
can be constructed,
where $o(1)\rightarrow 0$ as $n\rightarrow\infty$ and $C^n$ is a normalizing constant.
\end{enumerate}
So, $L_\mr{ren}^n$ should be a constructed from $L_\mr{app}^n$\,\loo, ideally not being way too far from it. It will generically consist of $L_\mr{app}^n$ plus some counter-terms which diverge with the cutoff $n$. This is the renormalization problem for the perturbation Lagrangian, and attempting to solve it requires its own strategic choice. We proceed to state and justify ours, which, again, is not too different from established practice.

Step (a) is quite simple; the real action occurs in step (b). We approach it by noting that, at first order, we should have
\(
f^n \approx 1 - L_\mr{app}^n
\)
and, therefore, one can start by turning $L_\mr{app}$ itself into a cylinder density. Now, the compatibility requirement for a cylinder density reads
\[
L_\mr{ren}^n = \EE{L_\mr{ren}^{n+m}}[x^n],
\]
where the conditional expectation is taken with respect to the reference measure $\mu$.
This equation enables one to identify a good class of counterterms, inferred by computing conditional expectations of the approximate interaction Lagrangian. There will be freedom in the resulting (first-order) renormalized Lagrangian $L_\mr{ren}^n$\,\loo, but a canonical choice is obtained by simply deleting the divergent terms in $\lim_{m\rightarrow\infty} \EE{L_\mathrm{app}^{n+m}}[x^n]$. We refer to this as \emph{first order renormalization,} which in the Gaussian reference case is done by Wick ordering $L_\mr{app}^n$\,\loo.

Once the approximate perturbation Lagrangian has been put in (generalized) Wick order, one might be lucky and have that the limit
\[
f^n(x) = \lim_{m\rightarrow\infty} C^{n+m}\EE{\exp\left(-L_\mr{ren}^{n+m}\right)}[x^n]
\]
exists (which amounts to existence of the effective Lagrangian) and is integrable (which is related to stability). If that is the case, the problem has been solved. 
This turns out to work in simple cases, such as $\phi^4$ theory in $d=2$ space-time dimensions. In general, however, in order to have convergence as $m\rightarrow\infty$ one needs to pick the counterterms more carefully. That can be done by introducing a formal parameter $\lambda$ in front of the renormalized Lagrangian (which will be set to 1 in the end), letting the counterterms depend on that parameter and asking for the coefficients in a suitable series expansion for the effective Lagrangian to be finite. That would be the problem of higher-order renormalization, from the cylinder measure point of view.

\subsection{Final introductory remarks}

In this work we present some results stemming from pursuing the cylinder measure approach to constructive field theory in the case of scalar fields. This approach is still in an early stage of development; consequently, and by necessity, the style here is rather exploratory at times. We study one main example: the $\Gamma$ field, arriving at the conclusion that first order renormalization of the kinetic energy is not enough for this model, which is likely to be non-renormalizable. Higher-order renormalization is not attempted. 

Before getting into the matter, we summarize here what we believe are some strong points of our approach, together with a couple of brief comments meant to help put it in context. This is obviously not supposed to be exhaustive.

\begin{itemize}
	\item 
Perturbing the potential energy, as opposed to the kinetic one, is a natural, obvious choice for studying strong coupling regimes. That being said, it has come as a surprise to us the realization that the natural running constant that the kinetic energy picks up when regarded as a perturbation is asymptotically vanishing in the ultraviolet limit. Thus, the regimes that can be studied when perturbing a measure containing only the potential energy are a priori out of reach if one takes the complementary point of view of perturbing a measure containing only the kinetic energy---and conversely.
	\item 
By considering arbitrary continuous product reference measures we have arrived at a generalization of Wick ordering which throws a lot of light on that concept.  There are several other aspects of our approach that fit very nicely within the conceptual framework that physicists have developed, suggesting that it has the potential of greatly extending and clarifying the usual technical tools of perturbation theory.
	\item
There are certain parallels between what we do and white noise theory, with our cylinder densities playing the role of Hida distributions~\cite{hida2013white}. The two technical frameworks are, however, entirely different, with ours being more concrete and computation-oriented.
	\item 
Our approach to the effective Lagrangian should eventually lead to something related to the Meyer expansion in constructive renormalization~\cite{rivasseau1991from}, which differs from perturbative renormalization in that non-local effective Lagrangians are considered.
\end{itemize}

\section{Reference measures}

\subsection{Cylinder measures and densities}

Let $\mc P$ be a directed set, $\set{X_P|P\in\mc P}$ a projective system of topological spaces with projective limit $\overline X$ and canonical projections $\pi_P:\overline X\rightarrow X_P$\loo, and $X$ a subspace of $\overline X$ which is \emph{full,} in the sense that $\pi_P(X)=X_P$\loo. By a harmless abuse of notation, we will usually write $P$ instead of $\pi_P$\loo.
It will also be convenient to write $P$ for the projection $X\rightarrow X_P$\loo, and even for the projection $X_Q\rightarrow X_P$ when $Q\succcurlyeq P$ is understood from the context. 
We think of $\set{P:X\rightarrow X_P|P\in\mc P}$ as a coordinate system on $X$. 
Now, given $P\in\mc P$, consider the algebra
\[
\Cb{X;P} = \Set{\vphantom{\hat A} f\in \Cb{X} | f \text{ factors through } P:X\rightarrow X_P }.
\]
If $P\preccurlyeq Q$, there is a natural inclusion $
\Cb{X;P}\hookrightarrow \Cb{X;Q}$. The resulting directed system has an algebraic injective limit 
\[
\Cb{X;\mc P} = \injlim \Set{\vphantom{\hat A} \Cb{X;P} | P\in\mc P }.
\]
The elements of $\Cb{X;\mc P}$ are called \emph{cylinder functions} (for the coordinate system in use). 

\begin{defn}
Let $X$ be a Tychonoff space equipped with a coordinate system $\Set{P:X\rightarrow X_P|P\in\mc P}$.
A \emph{cylinder measure} on $X$ is a family of Radon measures $\set{\mu_P\text{ on } X_P}$ which is \emph{compatible,} in the sense that
\[
P_*\mu_Q = \mu_P\lo,\quad\text{ for all } Q\succcurlyeq P.
\]
We can also adopt a dual point of view and define a cylinder measure as a compatible family of positive linear functionals $\set{\rho_P:\Cb{X_P}\rightarrow\R}$---or, in other words, a positive linear functional on the injective limit $\Cb{X;\mc P}$. When the measure $\mu$ is clear from the context, we will sometimes write 
\[
\int_X f(x)\mu(\D x) = \rho_P(f_P) = \int_{X_P}f_P(x_P)\mu_P(\D x_P),\quad f\in\Cb{X;\mc P},
\]
where $P\in\mc P$ and $f_P\in\Cb{X_P}$ are such that $f=f_P\circ P$.
\end{defn}

We will assume that $\mc P\cong\Set{P^0,P^1,\dots}$ and write $X^n$ for $P^nX$, etcetera. Given a reference cylinder measure $\mu$, it is useful to regard $\set{x^n}_{n\in\N}$ as a stochastic process, which we call \emph{resolution process.} 
Now, given a family $\set{f^n:X^n\rightarrow\R}$, consider the compatibility, or martingale condition
\[
f^n = \EE{f^m}[x^n],
\]
where the conditional expectation is taken with respect to $\mu^m$.
If the $f$'s are uniformly bounded in the supremum norm, then $f\mu = (f^n\mu^n)$ is a finite, signed cylinder measure.
Observe that the compatibility condition greatly simplifies if $\mu$ is a product, i.e.\ the resolution process has independent increments. Families satisfying the compatibility condition will be called \emph{cylinder densities.}

\subsection{Momentum space coordinates}

We are interested in the case of $X$ being a space of real-valued functions over a space-time manifold $S$.
Two natural choices of coordinate system are: physical space coordinates; and, if $S$ is a homogeneous space, momentum space coordinates. Although we will exclusively use physical space coordinates, we take the opportunity to exemplify the  cylinder measure concept with the free field in momentum coordinates.

For the sake of simplicity, consider the compact space-time $S=\T^d$. Let
\begin{align*}
X^n &= \Set{\text{trigonometric polynomials of degree at most $n$}} \\
&= \bigoplus_{i\leq n} X_i\lo,\text{ the space of homogeneous polynomials of degree } i.
\end{align*}
The space of fields is some (Banach completion of a) subspace $X$ of $\varprojlim X^n$ (and we can't tell which one beforehand, because we want it to be a vector space where certain cylinder measure, to be specified, Radonifies). Those are the momentum space coordinates. This projective limit is simply a countable product, and therefore reference measures are very easy to come up with: just choose one measure for each coordinate and take their product. 
But then again, not any such product will be physically meaningful. Physically, this coordinate system is well-suited to treat the free scalar field, whose Lagrangian consists of the kinetic energy plus a mass term. Indeed, the Fourier transform diagonalizes the Lagrangian density 
\[
\mc L_\mr{free}(x) = \frac12 \bigl( mx(s)^2 + \abs{\nabla x(s)}^2 \bigr),
\]
and the resulting measure becomes a product when using the Fourier modes as coordinates. 

\subsection{Physical space coordinates} \label{continuous products}

One can take the interaction part of the Lagrangian, as opposed to its free part, as reference measure. This leads one to recover, in the case of homogeneous fields, the Lévy white noise measures. 
 
Assuming that the interaction part of the Lagrangian density does not involve field derivatives, it should diagonalize in physical space coordinates. Now, in this coordinate system the measure we are interested in ought to be a \emph{continuous} product~\cite{guichardet2006symmetric}---and, as we will shortly see, in this case a non-trival interdependence between the  measures in the corresponding family of finite-dimensional projections is inevitable. Concretely, this interdependence ex\-press\-es itself by the appearance of running parameters in the effective Lagrangians, with parameterized families belonging to a very limited class (consisting of logarithms of infinitely divisible distributions) actually giving rise to an associated ``product'' measure. Moreover, the generic case seems to be that the unavoidable parameters diverge in the ultraviolet limit, showing that the notion of a Lagrangian density has certain limitations in this context. 
\begin{rk}
If one is interested in an interation potential which is not the logarithm of an infinitely divisible distribution, then a part of it must inevitably go into the remainder Lagrangian density.
\end{rk}

Let $(S,\D s)$ be a measure space. 
We take a system $\set{p_i | i=1\dots n}$ of projections of the von Neumann algebra $L^\infty(S)$ which is orthogonal and complete, in the sense that $p_ip_j=0$ and $\sum p_i=1$. 
To $\{p_i\}$ we associate the conditional expectation
\[
P:L^\infty(S)\rightarrow X_P\loo,\quad
P = \sum_i p_ip_i^*\loo,\ p^*(x) = \int_S \bar px,
\]
where $\bar p = p/\abs p$ and $\abs p = \int_X p$. 
Now, let $\set{q_{ij} | i=1\dots n,\,j=1\dots m}$ be a refinement of $\{p_i\}$, i.e.\ another complete system of orthogonal projections such that $p_i=\sum_{j} q_{ij}$, with associated conditional expectation $Q:L^\infty(S)\rightarrow X_Q$\loo. Since $\{q_{ij}\}$ is a refinement of $\{p_i\}$, we have a projection (conditional expectation) $X_Q\rightarrow X_P$\loo. Given a directed family $\mc P$ of such systems of orthogonal projections we get a projective system $\{X_P\}$ and, if the family generates $L^\infty(S)$, then any good $X\subseteq L^\infty(S)$ will become a full subspace of $\projlim X_P$\loo. 

Let us construct a cylinder measure in this coordinate system. For the sake of definiteness, assume that $S$ is a $d$-dimensional manifold which is divided into hyper-cubed regions, with each $p_i$ corresponding to a hyper-cube (if there are infinitely many of them our cylinder measure will be a projective limit of projective limits, but that's not a problem), and that these regions are, in turn, subdivided into $r=2^d$ hyper-cubes, with projections $\{p_{ij}\}$, etcetera. Thus, $\abs{p_{i_1\cdots i_{n+1}}} = \abs{p_{i_1\cdots i_n}}/r$.
We will use the variables
\[
x_{i_1\cdots i_n} = p_{i_1\cdots i_n}^*(x),\quad x'_{i_1\cdots i_{n+1}} = \left( p_{i_1\cdots i_{n+1}}^*- p_{i_1\cdots i_n}^* \right)(x)
\]
so that 
\[
x_{i_1\cdots i_{n+1}} = \left\{\begin{aligned}
&x_{i_1\cdots i_n} + x'_{i_1\cdots i_n i_{n+1}}& &i_{n+1}<r,\\
&x_{i_1\cdots i_n} - \sum_{j=1}^{r-1}x'_{i_1\cdots i_n j}& &i_{n+1}=r.
\end{aligned}\right.
\]
Writing $\D x_{i_1\cdots i_n(\cdot)} = \D x_{i_1\cdots i_n,1}\cdots\D x_{i_1\cdots i_n,r}$ and $\D x'_{i_1\cdots i_n(\cdot)} = \D x'_{i_1\cdots i_n,1}\cdots \D x'_{i_1\cdots i_n,r-1}$ (exterior products are meant), one has that $\D x_{i_1\cdots i_n(\cdot)}$ equals
\begin{align*}
	&\bigl(\D x_{i_1\cdots i_n}+\D x'_{i_1\cdots i_n,1}\bigr)\cdots\bigl(\D x_{i_1\cdots i_n} + \D x'_{i_1\cdots i_n,r-1}\bigr)\left(\D x_{i_1\cdots i_n} - \sum_{j=1}^{r-1} \D x'_{i_1\cdots i_n,j} \right) \\
	&\quad = \sum_{j=1}^{r-1} \D x'_{i_1\cdots i_n,1}\cdots \underbrace{\D x_{i_1\cdots i_n}}_{j\text{-th place}} \cdots \D x'_{i_1\cdots i_n,r-1}(-\D x'_{i_1\cdots i_n,j}) + \D x'_{i_1\cdots i_n(\cdot)}\D x_{i_1\cdots i_n} \\
	&\quad = r\D x'_{i_1\cdots i_n(\cdot)}\D x_{i_1\cdots i_n}\LO.
\end{align*}
Now write $\D\mu^n(x^n) =  f^n(x^n)\D x^n = \prod f_{i_1\cdots i_n}(x_{i_1\cdots i_n})\D x_{i_1\cdots i_n}$\loo. In solving for $\left(f_{i_1\cdots i_n}\right)$ in terms of $\left(f_{i_1\cdots i_ni_{n+1}}\right)$, the partial integration factors into low-dimensional integrals involving only variables which are internal to each $i_1\cdots i_n$ region. Omitting these indices, one gets
\begin{align*}
f(x) &= r\idotsint\D x'_1\cdots\D x'_{r-1}\, f_1\bigl(x+x'_1\bigr)\cdots f_{r-1}\bigl(x+x'_{r-1}\bigr)f_r\bigl(x - (x'_1+\cdots+ x'_{r-1})\bigr) \\
	&= r\int\D x'_1\, f_1\bigl(x+x'_1\bigr)\cdots \int\D x'_{r-2}\, f_{r-2}\bigl(x+x'_{r-2}\bigr) \\
	&\qquad \cdot
\int\D x'_{r-1}\, f_{r-1}\bigl(x+x'_{r-1}\bigr) f_r\bigl((x - x'_1-\cdots-x'_{r-2}) - x'_{r-1})\bigr) \\
	&= r\int\D x'_1\, f_1\bigl(x+x'_1\bigr)\cdots \int\D x'_{r-2}\, f_{r-2}\bigl(x+x'_{r-2}\bigr) \\
	&\qquad \cdot (f_{r-1}*f_r)\bigl((2x - x'_1-\cdots-x'_{r-3})-x'_{r-2}\bigr) \\
	&\ \, \vdots\\
	&= r\bigl( f_1*\cdots*f_r \bigr)(rx).
\end{align*}
Solutions to these compatibility conditions can be obtained by writing them in terms of the Fourier transforms, for which---putting back the omitted indices---they read
\[ 
\hat f_{i_1\cdots i_n}(\xi) = \hat f_{i_1\cdots i_n,1}(\xi/r)\cdots \hat f_{i_1\cdots i_n,r}(\xi/r),\quad r=2^d.
\] 
For homogeneous solutions, the compatibility conditions boil down to
\[
\hat f_{n+1} = \hat f_n(r\xi)^{1/r}.
\]
Thus, any infinitely divisible distribution will provide a solution. 

\begin{ex}
Take $\hat f_n(\xi) = \e^{-r^n\xi^2/2}$.
This is a Gaussian measure with density
\[
f^n(x) = C\e^{-\frac1{r^n}\sum x_i^2}. 
\]
Assuming (without loss of generality) that the initial hyper-cubes $p_i$ have unit volume, then the hyper-cubes $p_{i_1\cdots i_n}$ have volume $r^{-n}$, the sum in the exponential is a Riemann sum and the resulting cylinder measure can be formally written
\[
\mu(\D x) = C\e^{-\frac12\int_S x(s)^2\D s}\D x = \Prodi_s C_s\e^{-\frac12 x(s)^2\D s}\D x(s),
\]
where $\prodi$ stands for a continuous product. 
\end{ex}
\begin{ex}
Let us see a case with running parameters: the Cauchy field, for which $\hat f_n(\xi) = \e^{-\abs{\xi}}$. The resulting measure has density
\[
f^n(x) = C\prod(1+x_i^2)^{-1}.
\]
This can be cast in the form $C\e^{-L^n(x)}$ with
\[
L^n(x) = 
 \sum\log( 1+x_i^2 ).
\]
This time, we don't get a Riemann sum. In order to obtain an expression analogous to that for the Gaussian above, we introduce the running parameter $\lambda(n) = r^n$, so that we can formally write
\[
\mu(\D x) = C\e^{-\lambda\int_S\log\bigl(1+x(s)^2\bigr)\D s}\D x,
\]
which is a strong coupling limit for the Lagrangian  $L(x) = \int_S\log\bigl(1+x(s)^2\bigr)\D s$.
\end{ex}
\begin{rk}
In the case $S=\R^d$,
the measures that we have just constructed can be easily obtained from the Bochner-Minlos theorem applied to the characteristic function 
\[
\EE{\e^{\I\langle\xi,x\rangle}} = \exp\left\{ \int_S c\bigl(\xi(s)\bigr)\D s \right\}, 
\]
where
\[
c(\xi) = \I b\xi - \frac12\sigma\xi^2 + \lambda\int (\e^{\I\xi x}-1)\D r(x) 
\]
is a Lévy characteristic. 
We will, however, rely on the concrete representation worked out above for making explicit calculations.
\end{rk}

\section{Wick ordering}

As explained in the introduction, first order renormalization 
gives rise to a notion of generalized Wick ordering, reproducing standard Wick ordering when the reference measure is Gaussian. Stating and solving the compatibility conditions for polynomial densities involves computing their conditional expectations, as we proceed to do now.

\subsection{Conditional expectation of polynomials}

Let $\mu=(\mu^n)$ be a product measure in physical space coordinates. We will suppose that $S$ has volume 1 and, as in \autoref{continuous products}, that each refinement $\set{p_{i_1\dots i_{n+1}}}$ of $\set{p_{i_1\dots i_n}}$ subdivides its hypercubes into $r=2^d$ hypercubes which, at step $n$, will have sidelength $\eps^n$ where $\eps=1/2$. Thus, each index $i_k$ runs over $I=\set{1,\dots,r}$.
Writing $x^n=(x_i)_{i\in I^n}$ and $x^{n+m} = (x_{ij})_{i\in I^n,j\in I^m}$,
we want to compute expectation values of the form
\begin{equation} \label{poly}
\EE{(x_{ij_1})^{k_1}\cdots (x_{ij_p})^{k_p}}[x^n]. 
\end{equation}
So, let $y^{n+m} = (y_{ij})_{i\in I^n,j\in I^m}$ be such that $\sum_j y_{ij}=0$ for each $i$ and
\(
x_{ij} = x_i + y_{ij}\lo.
\) 
Convening that $y^m_i = (y_{ij})_{j\in I^m}$ and $\D y_i^{m} = \prod_{j\neq (r,\dots, r)}\D y_{ij}$ one has that
\begin{align*}
\D\mu^{n+m}(x^{n+m}) &= \left(\prod_{i,j} f_{n+m}(x_{ij}) \right) \D x^{n+m} =  \left(\prod_{i,j} f_{n+m}(x_i+y_{ij}) \right)\left( \prod_i r^m\D x_i\D y_i^{m}\right) \\
	&= \prod_i \left(r^m \prod_j f_{n+m}(x_i+y_{ij}) \right)\D x_i\D y_i^{m} =: \prod_i f_n^m\bigl(x_i\lo,y^m_i\bigr) \D x_i\D y^m_i
\end{align*}
and, by the compatibility condition, 
\begin{align*}
f_n(x) &= \int f^m_n(x,y^m)\D y^m \\
	&= r^m\idotsint \D y_1\cdots\D y_{r^m-1} \prod_{j=1}^{r^m} f_{n+m}(x+y_j),\quad y_{r^m} = -\sum_{j=1}^{r^m-1}y_j\lo,
\end{align*}
where we have dropped the index $i$ and renamed the mute $y$-variables. 

\subsubsection{First-order renormalizability hypothesis}

We want expectation values as in~\eqref{poly} to be given by another polynomial of the same degree. This can be ensured under the following assumption.

\begin{hypo} \label{renormalizability}
For each $t\geq 0$ and $i\in\N$, there are constants $\Set{c_{ij}(t) | j\leq i+1}$ such that
\[
\hat f_t^{(i)}\hat f_t' = \sum_{j=0}^{i+1}c_{ij}(t) \hat f_t^{(j)}\hat f_t\lo,
\]
where $\set{f_t}_{t\geq 0}$ interpolates $\set{f_n}_{n\in\N}$ by $\hat f_t(\xi) = \hat f_0(r^t\xi)^{1/r^t}$.
\end{hypo}
\begin{prop}
Under \autoref{renormalizability}, for all $t\geq 0$, $k\in\N$ and $\lambda>0$, the vector spaces generated by the sets $\Set{\left(\hat f_t^\lambda\right)^{(\ell)} | \ell\leq k}$ and $\Set{\hat f_t^{(\ell)}\hat f_t^{\lambda-1} | \ell\leq k}$ are equal.
\end{prop}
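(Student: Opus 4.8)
The plan is to prove the two spans are equal by a double induction argument—induction on $k$, combined with the recursion supplied by Hypothesis~\ref{renormalizability}—and, crucially, to first reduce the claim for general $\lambda$ to the case $\lambda=1$, which is essentially the content of the hypothesis itself, by extending the hypothesis from the single power $\hat f_t$ to the power $\hat f_t^\lambda$. The key observation is that if we set $g = \hat f_t^\lambda$, then $g' = \lambda \hat f_t^{\lambda-1}\hat f_t'$, so that $g^{(i)} g' $ is, after applying the Leibniz rule to $g^{(i)}$ and collecting terms, a combination of expressions of the form $\hat f_t^{(a_1)}\cdots \hat f_t^{(a_k)} \hat f_t^{\lambda - k}$. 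So the real work is to show that the span of the functions $\bigl(\hat f_t^\lambda\bigr)^{(\ell)}$ for $\ell \le k$ coincides with the span of $\hat f_t^{(\ell)}\hat f_t^{\lambda-1}$ for $\ell \le k$, and both of these are, in an appropriate sense, controlled by the one-variable recursion in the hypothesis.

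First I would establish the inclusion $\vspan\Set{\bigl(\hat f_t^\lambda\bigr)^{(\ell)} | \ell\leq k} \subseteq \vspan\Set{\hat f_t^{(\ell)}\hat f_t^{\lambda-1} | \ell\leq k}$. For $k=0$ this is the identity $\hat f_t^\lambda = \hat f_t \cdot \hat f_t^{\lambda - 1}$. For the inductive step, differentiate: $\bigl(\hat f_t^\lambda\bigr)^{(k+1)} = \bigl(\bigl(\hat f_t^\lambda\bigr)^{(k)}\bigr)'$, and by the inductive hypothesis $\bigl(\hat f_t^\lambda\bigr)^{(k)}$ is a combination $\sum_{\ell\le k} a_\ell\, \hat f_t^{(\ell)}\hat f_t^{\lambda-1}$; differentiating a single term gives $\hat f_t^{(\ell+1)}\hat f_t^{\lambda-1} + (\lambda-1)\hat f_t^{(\ell)}\hat f_t^{\lambda-2}\hat f_t'$, and the second summand is where Hypothesis~\ref{renormalizability} enters: it rewrites $\hat f_t^{(\ell)}\hat f_t'$ as $\sum_{j\le \ell+1} c_{\ell j}(t)\,\hat f_t^{(j)}\hat f_t$, so that $\hat f_t^{(\ell)}\hat f_t^{\lambda-2}\hat f_t' = \sum_{j\le\ell+1} c_{\ell j}(t)\,\hat f_t^{(j)}\hat f_t^{\lambda-1}$, landing back in the target span with indices $\le k+1$. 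The reverse inclusion is handled symmetrically: I would show $\hat f_t^{(\ell)}\hat f_t^{\lambda-1}$ lies in the span of the derivatives $\bigl(\hat f_t^\lambda\bigr)^{(m)}$, $m\le\ell$, again by induction on $\ell$, using that $\bigl(\hat f_t^\lambda\bigr)^{(\ell)} = \lambda\,\hat f_t^{(\ell)}\hat f_t^{\lambda-1} + (\text{terms with lower-order derivatives of }\hat f_t \text{ times }\hat f_t^{\lambda-1})$, where the lower-order terms are, by induction, already in the derivative span, and $\lambda>0$ lets us solve for the top term.

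The main obstacle, and the point that deserves the most care, is verifying that the recursive substitutions terminate with all derivative orders staying within the bound $\ell\le k$—i.e.\ that invoking Hypothesis~\ref{renormalizability} on $\hat f_t^{(\ell)}\hat f_t'$ never produces a term $\hat f_t^{(j)}$ with $j > k$. This is exactly why the hypothesis is stated with the sharp bound $j\le i+1$: differentiating once raises the order by exactly one, and the hypothesis reabsorbs the resulting $\hat f_t^{(i+1)}\hat f_t'$-type term without raising it further. I would make this bookkeeping precise by phrasing both inductions as statements about the filtration $V_k := \vspan\Set{\hat f_t^{(\ell)}\hat f_t^{\lambda-1}|\ell\le k}$, showing $V_k' \subseteq V_{k+1}$ (using the hypothesis) and $\hat f_t^\lambda \in V_0$, whence $\bigl(\hat f_t^\lambda\bigr)^{(k)}\in V_k$ follows immediately by iterating; the reverse inclusion then reduces to a finite-dimensional linear-algebra argument that the map sending $\hat f_t^{(\ell)}\hat f_t^{\lambda-1}$ to the top-order part of $\bigl(\hat f_t^\lambda\bigr)^{(\ell)}$ is triangular with nonzero (namely $\lambda$) diagonal. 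A secondary point to be careful about is the regularity needed to differentiate freely and the non-vanishing of $\hat f_t$ on the relevant domain so that negative powers $\hat f_t^{\lambda-2}$ make sense; since $\hat f_t$ is a characteristic function (of an infinitely divisible law) this holds in a neighbourhood of the origin, and one should either restrict attention there or note that the identities are algebraic and extend by analyticity.
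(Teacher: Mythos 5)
Your forward inclusion, $\vspan\{(\hat f_t^\lambda)^{(\ell)} \mid \ell\le k\}\subseteq\vspan\{\hat f_t^{(\ell)}\hat f_t^{\lambda-1} \mid \ell\le k\}$, is exactly the paper's argument: induct on $k$, differentiate the already-reduced representation $\sum_i c_i\hat f_t^{(i)}\hat f_t^{\lambda-1}$, and absorb the resulting $\hat f_t^{(i)}\hat f_t^{\lambda-2}\hat f_t'$ terms with \autoref{renormalizability}. That half is fine. The reverse inclusion is where you depart from the paper, and where there is a genuine gap. You propose to invert a triangular matrix whose $\ell$-th diagonal entry you identify as $\lambda$, namely ``the coefficient of $\hat f_t^{(\ell)}\hat f_t^{\lambda-1}$ in $(\hat f_t^\lambda)^{(\ell)}$''. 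But that coefficient equals $\lambda$ only in the raw Fa\`a di Bruno expansion, before the remaining terms $\hat f_t^{(a_1)}\cdots\hat f_t^{(a_p)}\hat f_t^{\lambda-p}$ (all $a_i<\ell$) are themselves rewritten in the basis $\hat f_t^{(j)}\hat f_t^{\lambda-1}$; that rewriting goes through \autoref{renormalizability}, which raises the derivative order by one each time it is invoked and therefore feeds back into the top coefficient. Running your own recursion, the diagonal entry at level $\ell$ is
\[
a_\ell \;=\; \lambda\prod_{i=1}^{\ell-1}\bigl(1+(\lambda-1)\,c_{i,i+1}(t)\bigr),
\]
which already for the Gaussian is $\lambda^\ell$, not $\lambda$. \autoref{renormalizability} imposes no constraint on the constants $c_{i,i+1}(t)$, so nothing prevents a factor $1+(\lambda-1)c_{i,i+1}(t)$ from vanishing for the given $\lambda$; in that case the matrix is singular and you cannot solve for the top term. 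Invertibility therefore needs an argument you do not supply.

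The paper avoids this issue entirely: it proves only the one inclusion and deduces the converse from the reparametrization identity $\hat f_t^\lambda(\xi)=\hat f_{t\log_r(1/\lambda)}(\lambda\xi)$, i.e.\ it applies the already-established inclusion to the family member at time $t\log_r(1/\lambda)$ with exponent $1/\lambda$ and then multiplies through by $\hat f_t^{\lambda-1}$. This is the step your plan is missing; to salvage your linear-algebra route you would have to prove the nonvanishing of the $a_\ell$ (which does not follow from the hypothesis as stated), whereas importing the scaling trick makes the reverse inclusion a one-line consequence of the forward one. Your closing remarks on regularity and non-vanishing of $\hat f_t$ near the origin are sensible but secondary; the paper treats these identities as algebraic and does not dwell on them.
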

\begin{proof}
It suffices to show that the vector space generated by $\Set{(\hat f_t^\lambda)^{(\ell)} | \ell\leq k}$ is contained in that generated by $\Set{\hat f_t^{(\ell)}\hat f_t^{\lambda-1} | \ell\leq k}$. Equality would then follow, because
\[
\hat f_t^\lambda(\xi) = \hat f_0(r^t\xi)^{\lambda/r^t} = \hat f_{t\log_r 1/\lambda}(\lambda\xi)
\]
and therefore
\begin{align*}
\Langle\Set{ \hat f_t^{(\ell)}|\ell\leq k }\Rangle
	&= \Langle \Set{ \left(\hat f_{t\log_r 1/\lambda}^{1/\lambda}\right)^{(\ell)} |\ell\leq k } \Rangle \\
	&\subseteq \Langle \Set{ \hat f_{t\log_r 1/\lambda}^{(\ell)}\hat f_{t\log_r 1/\lambda}^{1/\lambda-1} |\ell\leq k } \Rangle 
	= \Langle \Set{ (\hat f_t^\lambda)^{(\ell)}\hat f_t^{1-\lambda} |\ell\leq k } \Rangle.
\end{align*}
Now, we will do this by induction on $k$. The cases $k=0,1$ are trivial. Then, by inductive hypothesis,
\[ 
\left(\hat f_t^\lambda\right)^{(k+1)} 
	= \left(\sum_{i=1}^{k} c_i\hat f_t^{(i)}\hat f_t^{\lambda-1} \right)' = \sum_{i=1}^k c_i\hat f_t^{(i+1)}\hat f_t^{\lambda-1} + (\lambda-1)\sum_{i=1}^k c_i\hat f_t^{(i)}\hat f_t^{\lambda-2}\hat f_t', 
\] 
and we conclude by a direct application of \autoref{renormalizability}.  
\end{proof}

Let $f = f_0$ and denote by $R^k(\lambda)\in\mathrm{GL}(k+1,\C)$ the complex matrix given by the equation
\[
f^{\lambda-1}\left(\begin{matrix} \hat f \\ \hat f' \\ \vdots \\ \hat f^{(k)} \end{matrix}\right) = R^k(\lambda)\left(\begin{matrix} \hat f^\lambda \\ (\hat f^\lambda)' \\ \vdots \\ (\hat f^\lambda)^{(k)} \end{matrix}\right).
\]
In particular, $R^1(\lambda) = \begin{pmatrix} 1 &0 \\ 0 &1/\lambda \end{pmatrix}$, but further terms will depend on $f$.
Observe that we can safely drop the superscript $k$, because $R^k(\lambda)$ is a lower-triangular matrix obtained from $R^{k+1}(\lambda)$ by simply erasing the last line and column---and the same will apply to their inverses.
We also define $R(\mu,\lambda) = R(\mu)^{-1}R(\lambda)$, so that 
\[
\underbrace{\hat f^{1-\mu} \hat f^{\lambda-1}}_{\hat f^{\lambda-\mu}} \begin{pmatrix} \hat f^\mu \\ (\hat f^\mu)' \\ (\hat f^\mu)'' \\ \vdots  \end{pmatrix} = R(\mu,\lambda) \begin{pmatrix} \hat f^\lambda \\ (\hat f^\lambda)' \\ (\hat f^\lambda)'' \\ \vdots \end{pmatrix}
\]
and $R(\nu,\mu)R(\mu,\lambda) = R(\nu,\lambda)$.
\begin{lem} \label{cond_exp_lem}
Convening that the entries of $R$ are indexed starting from 0, one has that
\begin{align*}
&\int\D y_2\, f_{n+m}^{*p}(y_2) f_{n+m}^{*(r^m-p-q)}(r^mx-y_1-y_2)y_2^k \\
	&\qquad= \sum_{\ell=0}^{k} (-\I r^{n+m})^k (\I r^{-n})^\ell R_{k\ell}(p/r^{n+m}, (r^m-q)/r^{n+m})  \\
	&\pushright{ \cdot  \left(x-y_1/r^m\right)^{\ell} f_{n+m}^{*r^m-q}(r^mx-y_1). }
\end{align*}
In particular,
\begin{align*}
&r^m\int\D y\, f_{n+m}^{*p}(y)f_{n+m}^{*(r^m-p)}( r^mx-y ) y^k \\
	&\qquad= \sum_{\ell=0}^k (-\I r^{n+m})^k(\I r^{-n})^{\ell} R_{k\ell}(pr^{-n-m}, r^{-n})  x^\ell f_n(x)
\end{align*}
and $r^m\int\D y\, f_{n+m}^{*p}(y) f_{n+m}^{*(r^m-p)}(r^mx-y)y = pxf_n(x)$.
\end{lem}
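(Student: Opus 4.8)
The plan is to pass to Fourier transforms, where the convolution powers of $f_{n+m}$ turn into ordinary powers of $\hat f_{n+m}$, multiplication by $y_2^k$ turns into differentiation, and \autoref{renormalizability} enters through the matrices $R$. I work with the convention $\hat h(\xi) = \int h(y)\e^{\I\xi y}\D y$, so that $\widehat{y^k h} = (-\I)^k\hat h^{(k)}$, and I abbreviate $\rho = r^{n+m}$, $g = f_{n+m}$. The compatibility identity $\hat f_t(\xi) = \hat f_0(r^t\xi)^{1/r^t}$ gives $\widehat{g^{*\alpha}}(\xi) = \hat f^{\,\alpha/\rho}(\rho\xi) =: F_{\alpha/\rho}(\rho\xi)$ for every $\alpha > 0$, where $f = f_0$ and $F_\beta := \hat f^{\,\beta}$.

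First I would note that, seen as a function of $x$ with $y_1$ held fixed, the left-hand side equals $J(r^m x - y_1)$, where $J := \bigl((\cdot)^k g^{*p}\bigr) * g^{*(r^m-p-q)}$ and I have used $p + (r^m - p - q) = r^m - q$. Taking Fourier transforms and applying the chain rule to the inner rescaling,
\[
\hat J(\xi) = (-\I)^k\bigl(\hat g^{\,p}\bigr)^{(k)}(\xi)\,\hat g^{\,r^m-p-q}(\xi) = (-\I)^k\rho^k\,F_{p/\rho}^{(k)}(\rho\xi)\,F_{(r^m-p-q)/\rho}(\rho\xi).
\]
Then I would invoke the defining relation of $R(\mu,\lambda) = R(\mu)^{-1}R(\lambda)$ with $\mu = p/\rho$ and $\lambda = (r^m-q)/\rho$, for which $\lambda - \mu = (r^m-p-q)/\rho$, rewriting $F_{(r^m-p-q)/\rho}\,F_{p/\rho}^{(k)}$ as $\sum_{\ell\leq k}R_{k\ell}(\mu,\lambda)\,F_{(r^m-q)/\rho}^{(\ell)}$. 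Reversing the rescaling through $F_{(r^m-q)/\rho}^{(\ell)}(\rho\xi) = \rho^{-\ell}\bigl(\hat g^{\,r^m-q}\bigr)^{(\ell)}(\xi) = \rho^{-\ell}\I^{\ell}\,\widehat{y^\ell g^{*(r^m-q)}}(\xi)$ and inverting the Fourier transform,
\[
J(a) = (-\I)^k\sum_{\ell=0}^{k}\rho^{\,k-\ell}\,\I^{\ell}\,R_{k\ell}\!\left(\tfrac{p}{\rho},\tfrac{r^m-q}{\rho}\right)a^{\ell}\,g^{*(r^m-q)}(a).
\]
Substituting $a = r^m x - y_1 = r^m\bigl(x - y_1/r^m\bigr)$ produces a factor $r^{m\ell}$ from $a^{\ell}$; since $\rho^{\,k-\ell}r^{m\ell} = r^{(n+m)k - n\ell}$ and $(-\I)^k\I^{\ell}r^{(n+m)k - n\ell} = (-\I r^{n+m})^k(\I r^{-n})^{\ell}$, one reads off exactly the asserted identity.

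The two displayed consequences come out as specializations. Putting $y_1 = 0$, $q = 0$ and multiplying through by $r^m$, I would invoke the compatibility relation in the form $f_n(x) = r^m f_{n+m}^{*r^m}(r^m x)$ --- obtained by iterating $m$ times the identity $f_n(x) = r\,(f_1 * \cdots * f_r)(rx)$ derived in \autoref{continuous products} --- to turn $r^m f_{n+m}^{*r^m}(r^m x)$ into $f_n(x)$. For the last identity I take $k = 1$: from $R^1(\lambda) = \bigl(\begin{smallmatrix}1 & 0 \\ 0 & 1/\lambda\end{smallmatrix}\bigr)$ one finds $R(\mu,\lambda) = \bigl(\begin{smallmatrix}1 & 0 \\ 0 & \mu/\lambda\end{smallmatrix}\bigr)$, so only the $\ell = 1$ term survives, with coefficient $(-\I r^{n+m})(\I r^{-n})(p\,r^{-m}) = p$, leaving $p\,x\,f_n(x)$.

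I expect the main obstacle to be purely organizational: carrying the three scalings $r^{n+m}$, $r^m$ and $r^{-n}$, together with the accompanying powers of $\I$, through the Fourier computation without slippage, and checking that the arguments of $R$ come out to exactly $p/r^{n+m}$ and $(r^m-q)/r^{n+m}$. The analytic steps --- differentiating under the integral sign, inverting the Fourier transform --- are harmless here: $f$ being an infinitely divisible density, $\hat f$ never vanishes and the powers $\hat g^{\,\alpha}$ and their derivatives are well-behaved, so I would not dwell on justifying them.
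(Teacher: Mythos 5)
Your argument is correct and follows essentially the same route as the paper: pass to Fourier transforms so the convolution powers become ordinary powers and $y_2^k$ becomes $(-\I)^k$ times a $k$-th derivative, apply the defining relation of $R(\mu,\lambda)$ at the rescaled argument (the paper packages your chain-rule bookkeeping as conjugation by the diagonal matrix $S^{n+m}$), invert, and substitute $r^m x - y_1 = r^m(x - y_1/r^m)$. The only cosmetic difference is that the paper expands $(x-y_1/r^m)^\ell$ binomially in its final display, while you keep it factored; your explicit treatment of the two specializations (via $f_n(x)=r^m f_{n+m}^{*r^m}(r^m x)$ and the $k=1$ matrix $R(\mu,\lambda)$) fills in steps the paper leaves implicit.
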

\begin{proof}
Let $S=\left(\begin{smallmatrix} 1 & & & \\ &r & & \\ & &r^2 & \\ & & &\ddots \end{smallmatrix}\right)$, and note that
\[
\hat f_{n+m}^{r^m-q-p} \begin{pmatrix} \hat f_{n+m}^p \\ (\hat f_{n+m}^p)'\\ (\hat f_{n+m}^p)'' \\ \vdots \end{pmatrix} = S^{n+m} R(p/r^{n+m}, (r^m-q)/r^{n+m}) S^{-(n+m)} \begin{pmatrix} \hat f_{n+m}^{r^m-q} \\ (\hat f_{n+m}^{r^m-q})' \\ (\hat f_{n+m}^{r^m-q})'' \\ \vdots  \end{pmatrix}
\]
which follows from
\[
\hat f^{1/r^n - (p+q)/r^{n+m}} \begin{pmatrix} \hat f^{p/r^{n+m}} \\ (\hat f^{p/r^{n+m}})' \\ \vdots  \end{pmatrix} = R(p/r^{n+m},(r^m-q)/r^{n+m}) \begin{pmatrix} \hat f^{1/r^n-q/r^{n+m}} \\ (\hat f^{1/r^n-q/r^{n+m}})' \\ \vdots  \end{pmatrix}
\]
by noting that
\[
(\hat f_{n+m}^p)^{(k)} = \left(\hat f^{p/r^{n+m}}(r^{n+m}\cdot)\right)^{(k)} = r^{(n+m)k}\left(\hat f^{p/r^{n+m}}\right)^{(k)}(r^{n+m}\cdot).
\]
Thus,
\begin{align*}
&\int\D y_2\, f_{n+m}^{*p}(y_2)f_{n+m}^{*(r^m-q-p)}( r^mx-y_1-y_2 ) y_2^k \\
	&\qquad= \left( (-\I)^k (\hat f_{n+m}^p)^{(k)}\hat f_{n+m}^{r^m-q-p} \right)^\vee(r^mx-y_1) \\
	&\qquad= (-\I)^k r^{(n+m)k} \sum_{\ell=0}^k R_{k\ell}(p/r^{n+m},(r^m-q)/r^{n+m}) r^{-(n+m)\ell} \\
	&\pushright{ \cdot \bigl(\I(r^mx-y_1)\bigr)^\ell f_{n+m}^{*r^m-q}(r^mx-y_1) } \\
	&\qquad= \sum_{\ell'\leq\ell\leq k} {\ell\choose\ell'} (-\I r^{n+m})^k (\I r^{-n})^\ell R_{k\ell}(p/r^{n+m}, (r^m-q)/r^{n+m})  \\
	&\pushright{ \cdot x^{\ell-\ell'}(-r^{-m}y_1)^{\ell'} f_{n+m}^{*r^m-q}(r^mx-y_1), }
\end{align*}
as claimed.
\end{proof}

\begin{theo} \label{cond_exp_of_powers}
\[
\EE{(x_{ij})^k}[x^n] = \sum_{\ell=0}^k (-\I r^{n+m})^k (\I r^{-n})^{\ell} R_{k\ell}(r^{-n-m},r^{-n}) x_i^{\ell}\LO.
\]
In particular, $\EE{x_{ij}}[x^n] = x_i$ and $\EE{y_{ij}}[x^n]=0$.
\end{theo}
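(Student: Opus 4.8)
The plan is to deduce the theorem from \autoref{cond_exp_lem} with $p=1$, after unwinding the conditional expectation by means of the product structure of $\mu$. Since $\mu=(\mu^n)$ is a product measure in physical space coordinates, the factorization recorded just above,
\[
\D\mu^{n+m}(x^{n+m}) = \prod_i f_n^m\bigl(x_i,y_i^m\bigr)\,\D x_i\,\D y_i^m,\qquad f_n^m(x,y^m)=r^m\prod_j f_{n+m}(x+y_j),
\]
with $x_{ij}=x_i+y_{ij}$, $\sum_j y_{ij}=0$ and $\D y_i^m=\prod_{j\neq(r,\dots,r)}\D y_{ij}$, shows that conditioning on $x^n=(x_i)$ makes the blocks independent and leaves, inside the $i$-th block, the conditional law $f_n^m(x_i,\cdot)/f_n(x_i)$ for the increments $y_i^m$. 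As $(x_{ij})^k$ depends only on $x_i$ and on the single increment $y_{ij}$, and the functions $f_{n+m}$ are the same for every inner index, I may fix one $j$ (say $j\neq(r,\dots,r)$, the remaining case being identical by symmetry) and work within that block.

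The next step is to pass to the variables $z_{j'}=x_i+y_{ij'}$, $j'\in I^m$, for which the distinguished coordinate becomes $z_{(r,\dots,r)}=r^mx_i-\sum_{j'\neq(r,\dots,r)}z_{j'}$ while $x_{ij}=z_j$. Integrating out the $r^m-2$ free variables $z_{j'}$, $j'\notin\{j,(r,\dots,r)\}$, and grouping them with the factor carrying the distinguished coordinate assembles an $(r^m-1)$-fold convolution power — exactly the manipulation behind the compatibility relation $f_n(x)=r^m f_{n+m}^{*r^m}(r^mx)$, which itself follows from the displayed formula for $f_n$ by the substitution $z_j=x+y_j$. One is then left with
\[
\EE{(x_{ij})^k}[x^n]\,f_n(x_i) = r^m\!\int\!\D y\;y^k\,f_{n+m}(y)\,f_{n+m}^{*(r^m-1)}(r^mx_i-y),
\]
whose right-hand side is precisely the left-hand side of the ``in particular'' identity of \autoref{cond_exp_lem} at $p=1$ (using $f_{n+m}^{*1}=f_{n+m}$). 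Substituting the conclusion of that identity and dividing through by $f_n(x_i)$ yields the asserted formula; the division is legitimate because the conditional expectation is only defined $\mu^n$-almost everywhere, and what one really proves is the equality of both sides multiplied by $f_n$.

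For the two special cases I would set $k=1$. The last displayed identity of \autoref{cond_exp_lem}, with $p=1$, reads $r^m\int\D y\,f_{n+m}(y)f_{n+m}^{*(r^m-1)}(r^mx-y)\,y=xf_n(x)$, i.e.\ $\EE{x_{ij}}[x^n]=x_i$; equivalently, from $R^1(\lambda)=\left(\begin{smallmatrix}1&0\\0&1/\lambda\end{smallmatrix}\right)$ one gets $R(r^{-n-m},r^{-n})=R(r^{-n-m})^{-1}R(r^{-n})=\left(\begin{smallmatrix}1&0\\0&r^{-m}\end{smallmatrix}\right)$, and the $k=1$ instance of the main formula collapses to $x_i$. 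Then $\EE{y_{ij}}[x^n]=\EE{x_{ij}}[x^n]-x_i=0$, since $x_i$ is $x^n$-measurable.

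I do not expect a genuine obstacle: the argument is essentially bookkeeping on top of \autoref{cond_exp_lem}. The point that needs the most care is the change of variables together with the distinguished last coordinate $y_{i,(r,\dots,r)}$ — one must verify that, after the shift $z=x+y$, the remaining free integrations really do reconstitute the convolution power evaluated at $r^mx_i-y$, and that the outcome does not depend on whether the chosen index $j$ happens to be the distinguished one (it does not, by symmetry of the $f_{n+m}$ and of convolution). Everything else is immediate.
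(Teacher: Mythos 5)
Your proposal is correct and follows essentially the same route as the paper: reduce to a single block, integrate out all inner variables except the distinguished one to produce the $(r^m-1)$-fold convolution, and then invoke the $p=1$ ``in particular'' case of \autoref{cond_exp_lem}; the paper merely performs the shift $y_1\mapsto y_1-x$ after integrating out rather than changing all variables at once. The treatment of the special cases $k=1$ also matches what the paper leaves implicit.
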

\begin{proof}
Without loss of generality, suppose that $j=1$ and drop the $i$ indices. Applying \autoref{cond_exp_lem} we compute
\begin{align*}
&\frac1{f_n(x)} \int\D y^m\, f_n^m(x,y^m) (x+y_1)^k \\
	&\qquad = \frac{r^m}{f_n(x)} \int\D y_1\, f_{n+m}(x+y_1) f_{n+m}^{*(r^m-1)}\bigl((r^m-1)x - y_1\bigr) (x+y_1)^k \\
	&\qquad = \frac{r^m}{f_n(x)} \int\D y_1\, f_{n+m}(y_1)f_{n+m}^{*(r^m-1)}\bigl( r^mx-y_1 \bigr) y_1^k \\
	&\qquad =  (-\I r^{n+m})^k \sum_{\ell=0}^k R_{k\ell}(r^{-n-m}, r^{-n}) (\I r^{-n}x)^{\ell}. \qedhere
\end{align*}
\end{proof}
Of course, analogous expressions can be obtained for things like~\eqref{poly}, but we won't write them explicitely except for one simple example: the $\Gamma$ reference measure.

\subsubsection{The $\Gamma$ field}

Take $\hat f_n(\xi) = (1-\I\xi/\beta_n)^{-\alpha_n}$, with $\alpha_n=\alpha_0/r^n$ and $\beta_n=\beta_0/r^n$. In the next few paragraphs we will omit the subindex $n$ for convenience. Since
\begin{align*}
\hat f^{(k)}\hat f' &= \I^k\alpha(\alpha+1)\cdots(\alpha+k-1)/\beta^k\cdot (1-\I\xi/\beta)^{-\alpha-k} \cdot \I\alpha/\beta \cdot(1-\I\xi/\beta)^{-\alpha-1} \\
	&= \frac{\I^{k+1}\alpha^2(\alpha+1)\cdots(\alpha+k-1)/\beta^{k+1}}{\I^{k+1}\alpha(\alpha+1)\cdots(\alpha+k)/\beta^{k+1}} \hat f^{(k+1)}\hat f 
	= \frac{\alpha}{\alpha+k} \hat f^{(k+1)}\hat f,
\end{align*}
\autoref{renormalizability} is satisfied. 
Now,
\begin{align*}
(\hat f^\lambda)^{(k)} &= \I^k\lambda\alpha(\lambda\alpha+1)\cdots(\lambda\alpha+k-1)/\beta^k\cdot (1-\I\xi/\beta)^{-\lambda\alpha-k} \\
	&= \frac{\lambda\alpha(\lambda\alpha+1)\cdots(\lambda\alpha+k-1)}{\alpha(\alpha+1)\cdots(\alpha+k-1)} \hat f^{\lambda-1}\hat f^{(k)} = \frac{(\lambda\alpha)_{k}}{(\alpha)_{k}} \hat f^{\lambda-1}\hat f^{(k)}
\end{align*}
where $(x)_n = x(x+1)\cdots (x+n-1)$ is the Pochhammer symbol, and
\[
R(\lambda)^{-1} = \left(\begin{matrix} 
1 & & &  \\ 
  &\lambda & &  \\
  & &\frac{(\lambda\alpha)_{2}}{(\alpha)_2} &  \\
  & & & \ddots  \\
\end{matrix}\right).
\]
Therefore,
\[
\EE{(x_{ij})^k}[x^n] 
	= r^{mk} R_{kk}(r^{-n-m}, r^{-n}) x_i^k 
	= r^{mk} \frac{(\alpha_{n+m})_{k}}{(\alpha_n)_{k}} x_i^k\Lo.
\]
More generally, we have the following result.
\begin{theo} \label{expectation of monomials}
\[
\EE{(x_{ij_1})^{k_1}\cdots (x_{ij_p})^{k_p}}[x^n] 
	= r^{mk}\frac{ (\alpha_{n+m})_{k_1}\cdots (\alpha_{n+m})_{k_p}}{ (\alpha_n)_{k} }x_i^k\Lo,\quad
	k=k_1+\cdots+k_p\lo.
\]
\end{theo}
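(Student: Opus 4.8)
The plan is to reduce Theorem \ref{expectation of monomials} to the single-power case already established in Theorem \ref{cond_exp_of_powers} (specialized to the $\Gamma$ field just above it), by exploiting the product structure of $\mu$ and the explicit diagonal form of $R(\lambda)^{-1}$ for the $\Gamma$ field. Concretely, I would go back to the representation
\[
\EE{(x_{ij_1})^{k_1}\cdots (x_{ij_p})^{k_p}}[x^n] = \frac{1}{f_n(x)} \int \D y^m\, f_n^m(x,y^m)\, (x+y_{j_1})^{k_1}\cdots (x+y_{j_p})^{k_p},
\]
where (dropping the $i$ index) $f_n^m(x,y^m) = r^m \prod_{j} f_{n+m}(x+y_j)$ subject to $\sum_j y_j = 0$. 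The integrand involves only the $p$ coordinates $y_{j_1},\dots,y_{j_p}$ explicitly; all the other $r^m-p$ coordinates appear only through the constraint, and integrating them out first collapses $\prod_{j\notin\{j_1,\dots,j_p\}} f_{n+m}(x+y_j)$ into a convolution power $f_{n+m}^{*(r^m-p)}$ evaluated at $(r^m-p)x - (y_{j_1}+\cdots+y_{j_p})$ (shifting each $y_j\mapsto x+y_j$ turns the zero-sum constraint into a sum equal to $r^m x$). This is exactly the kind of partial convolution manipulation performed in the proof of Theorem \ref{cond_exp_of_powers}.

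Next I would apply Lemma \ref{cond_exp_lem} iteratively, peeling off one factor $(x+y_{j_a})^{k_a}$ at a time: each application converts an integral of the form $\int \D y_a\, f_{n+m}^{*p_a}(y_a)\, f_{n+m}^{*(\text{rest})}(\cdots - y_a)\, y_a^{k_a}$ into a sum over $\ell_a\le k_a$ of terms with coefficient $R_{k_a\ell_a}$ times a lower convolution power and a polynomial in the remaining variables. For a general reference measure this produces a messy nested sum, but for the $\Gamma$ field the matrix $R(\mu,\lambda)=R(\mu)^{-1}R(\lambda)$ is \emph{diagonal} (since both $R(\mu)^{-1}$ and $R(\lambda)$ are diagonal, as computed explicitly above), so only the top term $\ell_a=k_a$ survives at each step. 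That is the structural miracle that makes the final formula a single monomial $x_i^k$ rather than a polynomial. Tracking the diagonal entries, the $a$-th peeling contributes the factor $r^{m k_a}\, (R(\mu,\lambda))_{k_a k_a}$ with the appropriate $\mu=p_a/r^{n+m}$, $\lambda$ determined by the running total; using the $\Gamma$ expression $R_{kk}(\lambda)^{-1} = (\lambda\alpha)_k/(\alpha)_k$ one reads off $(R(r^{-n-m},r^{-n}))_{k_a k_a} = (\alpha_{n+m})_{k_a}/(\alpha_n)_{k_a}$ after the telescoping of the intermediate $\lambda$'s, and the denominators $(\alpha)_{\text{partial sums}}$ cancel pairwise leaving a single $(\alpha_n)_{k}$ in the denominator with $k=k_1+\cdots+k_p$.

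I would then collect: the $p$ factors of $r^{mk_a}$ multiply to $r^{mk}$, the numerators give $(\alpha_{n+m})_{k_1}\cdots(\alpha_{n+m})_{k_p}$, the intermediate Pochhammer factors telescope, and division by $f_n(x)$ (which equals the surviving $f_{n+m}^{*r^m}(r^m x)/r^m$-type factor by compatibility) cleans up the remaining normalization, yielding exactly the stated $r^{mk}\,(\alpha_{n+m})_{k_1}\cdots(\alpha_{n+m})_{k_p}\,x_i^k/(\alpha_n)_k$. The main obstacle I anticipate is purely bookkeeping: keeping the shifts-of-argument and the running exponents $\lambda$ straight through $p$ successive applications of Lemma \ref{cond_exp_lem}, and checking that the intermediate Pochhammer symbols $(\alpha_n + \text{stuff})_\bullet$ really do telescope cleanly; there is no new idea beyond the diagonality of $R$ for the $\Gamma$ field, which is already in hand, so the argument should be a careful but routine induction on $p$ (the case $p=1$ being Theorem \ref{cond_exp_of_powers}).
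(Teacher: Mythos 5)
Your overall strategy is the same as the paper's: write the conditional expectation as an iterated integral against $f_{n+m}^{*(r^m-p)}\bigl(r^mx-\sum_\ell y_\ell\bigr)\prod_\ell f_{n+m}(y_\ell)y_\ell^{k_\ell}$ and peel off one variable at a time with \autoref{cond_exp_lem}, using the diagonality of $R$ for the $\Gamma$ field. But there is a genuine gap at the point you dismiss as bookkeeping. After integrating out $y_1$, the lemma leaves behind the factor $\bigl(r^mx-\sum_{\ell\geq 2}y_\ell\bigr)^{k_1}$ (up to normalization), which still depends on the variables yet to be integrated. When you then integrate out $y_2$ against $y_2^{k_2}$, you must expand this factor binomially, so the contribution of the second step is not the single diagonal entry $R_{k_2k_2}$ but the alternating sum
\[
C_2=\sum_{j=0}^{k_1}(-1)^j\binom{k_1}{j}R_{k_2+j,\,k_2+j}\bigl(r^{-n-m},\,r^{-n}-(p-2)r^{-n-m}\bigr)
=\sum_{j=0}^{k_1}(-1)^j\binom{k_1}{j}\frac{(\alpha_{n+m})_{k_2+j}}{\bigl(\alpha_n-(p-2)\alpha_{n+m}\bigr)_{k_2+j}},
\]
and similarly at every later step. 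Evaluating these sums in closed form (a Chu--Vandermonde-type identity) and showing that the products $C_1\cdots C_p$ collapse to $(\alpha_{n+m})_{k_1}\cdots(\alpha_{n+m})_{k_p}/(\alpha_n)_k$ is the actual content of the paper's proof, not a routine telescoping.

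Moreover, your proposed answer--a plain product of diagonal entries with ``pairwise cancelling'' denominators--would produce $\prod_a (\alpha_{n+m})_{k_a}/(\alpha_n)_{k_a}$, which is \emph{not} equal to the stated $\prod_a(\alpha_{n+m})_{k_a}/(\alpha_n)_k$ (already for $p=2$, $k_1=k_2=1$ one has $\alpha_n^2\neq\alpha_n(\alpha_n+1)$ in the denominator). So the mechanism you describe cannot yield the correct constant; the coupling between the accumulated power and the remaining variables, and the hypergeometric summation it forces, is the missing idea. Also note that the intermediate $\lambda$'s in $R(\mu,\lambda)$ are not ``determined by a running total'' in the way you suggest: the paper uses $R_{k_\ell+j,k_\ell+j}\bigl(r^{-n-m},r^{-n}-(p-\ell)r^{-n-m}\bigr)$, with the second argument shrinking as convolution factors are consumed, and these shifted Pochhammer symbols $\bigl(\alpha_n-(p-\ell)\alpha_{n+m}\bigr)_\bullet$ only resolve to $(\alpha_n)_k$ at the very end.
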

\begin{proof}
By repeated application of \autoref{cond_exp_lem},
\begingroup
\allowdisplaybreaks
\begin{align*}
&\EE{(x_{ij_1})^{k_1}\cdots (x_{ij_p})^{k_p}}[x^n] \\
	&\quad= \frac1{f_n(x)} \idotsint\D y_1\cdots\D y_p\, f_{n+m}^{*(r^m-p)}\biggl( (r^m-p)x-\sum_{\ell=1}^p y_\ell \biggr) \prod_{\ell=1}^p f_{n+m}(x+y_\ell)(x+y_\ell)^{k_\ell} \\
	&\quad= \frac1{f_n(x)} \idotsint\D y_1\cdots\D y_p\, f_{n+m}^{*(r^m-p)}\biggl( r^mx-\sum_{\ell=1}^p y_\ell \biggr) \prod_{\ell=1}^p f_{n+m}(y_\ell)y_\ell^{k_\ell} \\
	&\quad= \frac1{f_n(x)} \idotsint\D y_2\cdots \D y_{p}\, R_{k_1k_1}\bigl(r^{-n-m},r^{-n}-(p-1)r^{-n-m}\bigr) \biggl( r^mx - \sum_{\ell=2}^{p}y_\ell \biggr)^{k_1} \\
	&\pushright{ {}\cdot f_{n+m}^{*(r^m-p-1)}\biggl( r^mx-\sum_{\ell=2}^{p} y_\ell \biggr) \prod_{\ell=2}^{p} f_{n+m}(y_\ell)y_\ell^{k_\ell} } \\
	&\quad= \frac{1}{f_n(x)}\idotsint\D y_2\cdots\D y_p\, C_1\sum_{j=0}^{k_1} (-1)^j {k_1\choose j} \biggl( r^mx - \sum_{\ell=3}^p y_\ell \biggr)^{k_1-j}   \\
	&\pushright{ {}\cdot f_{n+m}^{*(r^m-p-1)}\biggl( r^mx-\sum_{\ell=2}^p y_\ell \biggr) f_{n+m}(y_2) y_2^{k_2+j} \prod_{\ell=3}^p f_{n+m}(y_\ell) y_\ell^{k_\ell} } \\
	&\quad= \frac{1}{f_n(x)} C_1\sum_{j=0}^{k_1} (-1)^j C_{2j} {k_1\choose j} \idotsint\D y_3\cdots\D y_p \biggl(r^mx - \sum_{\ell=3}^p y_\ell\biggr)^{k_1-j} \biggl(r^mx - \sum_{\ell=3}^p y_\ell\biggr)^{k_2+j} \\
	&\pushright{ {}\cdot f_{n+m}^{*(r^m-p-2)} \biggl(r^mx - \sum_{\ell=3}^p y_\ell\biggr) \prod_{\ell=3}^p f_{n+m}(y_\ell)y_\ell^{k_\ell} } \\
	&\quad\ \, \vdots\\
	&\quad= \frac{1}{f_n(x)} C_1C_2\cdots C_p (r^mx)^{k_1+\cdots+k_p} f_{n+m}^{*r^m}(r^m x) 
	= x^{k} r^{mk} C_1C_2\cdots C_p
\end{align*}%
\endgroup
where $k=k_1+\cdots+k_p$ and 
\begin{align*}
C_\ell 
	&= \sum_{j=0}^{k_1+\cdots +k_{\ell-1}} (-1)^j {k_1+\cdots+k_{\ell-1} \choose j} R_{k_\ell+j, k_\ell+j}\bigl( r^{-n-m}, r^{-n}-(p-\ell)r^{-n-m} \bigr) \\
	&= \sum_{j=0}^{k_1+\cdots+k_{\ell-1}} (-1)^j {k_1+\cdots+k_{\ell-1} \choose j} \frac{(\alpha_{n+m})_{k_\ell+j}}{\bigl(\alpha_n-(p-\ell)\alpha_{n+m}\bigr)_{k_\ell+j}}. 
\end{align*}
Now,
\begin{align*}
C_2 &= \sum_{j=0}^{k_1} (-1)^j{k_1\choose j} \frac{(\alpha_{n+m})_{k_2+j}}{ \bigl(\alpha_n - (p-2)\alpha_{n+m}\bigr)_{k_2+j} } \\
	&= \sum_{j=0}^{k_1} (-1)^{k_1} {k_1\choose j} \frac{(\alpha_{n+m})_{k_2} (\alpha_{n+m}+k_2)_{j} \bigl( -\alpha_n + (p-2)\alpha_{n+m} - (k_1+k_2-1) \bigr)_{k_1-j}}{ \bigl(\alpha_n - (p-2)\alpha_{n+m}\bigr)_{k_1+k_2} } \\
	&= (-1)^{k_1} \frac{ (\alpha_{n+m})_{k_2} }{ \bigl(\alpha_n - (p-2)\alpha_{n+m} \bigr)_{k_1+k_2} } \bigl(-\alpha_n + (p-1)\alpha_{n+m} - (k_1-1)\bigr)_{k_1}
\end{align*}
so that
\[
C_1C_2 = \frac{ (\alpha_{n+m})_{k_1} (\alpha_{n+m})_{k_2}}{\bigl( \alpha_n - (p-2)\alpha_{n+m} \bigr)_{k_1+k_2}}.
\]
Inductively,
\[
C_1\cdots C_p = \frac{ (\alpha_{n+m})_{k_1}\cdots (\alpha_{n+m})_{k_p} }{ (\alpha_n)_{k} }. \qedhere
\]
\end{proof}

\subsection{Wick ordering polynomial potentials}

For a discrete product reference measure, such as the free field on the torus, the resolution process has independent increments and one can use Wick (or Appell) polynomials---which in that case are known to have the required martingale property---to renormalize the Lagrangian density. For continuous product measures, however, increments are not independent anymore and we have to work out the right generalization. Now, we will not give a formal definition, but just note that there are two requirements:
\begin{itemize}
	\item
For each $k$, we want a family $\Set{\mc V^n_k(x)}$ of polynomials of degree $k$ such that $\Set{\frac1{r^n}\sum \mc V^n_k(x_i)}$ is a cylinder density.
	\item
The polynomials above must be somehow uniquely determined.\footnote{In the discrete product case, one can take them to be monic and orthogonal in $L^2(\mu^n)$ for $n$ large enough, but in the continuous product case multiplicative renormalization might be needed (cannot choose them monic anymore) and orthogonality becomes $n$ dependent.} 
\end{itemize}

With \autoref{cond_exp_of_powers} one is ready to renormalize polynomial interactions. As a trivial example, we have a candidate for renormalized polynomial of degree one: $\mc V^n_1(x)=x$. 
Indeed, 
\[
\frac1{r^{n+m}}\sum \EE{x_{ij}}[x^n] = \frac1{r^n} \sum x_i\loo,
\]
i.e.\ $\frac1{r^n}\sum\mc V^n_1(x_i)$ is a cylinder distribution. Will compute higher degree cases for some particular examples. Will adhere to the recipe of finding $\mc V^n_k$ by simply dropping the divergent terms in
\(
\lim_{m\rightarrow\infty} \EE{(x_{ij})^k}[x^n].
\)

\subsubsection{The $\Gamma$ field}

For renormalized powers one can take
\[
\mc V^n_k(x) = \frac{C}{(\alpha+r^n)\cdots(\alpha+(k-1)r^n)} x^k
	= \frac{C}{r^{kn}(\alpha_n)_k} x^k
\]
with the normalization $C=(\alpha_0)_k$ making $\mc V^0_k(x) = x^k$.
\begin{rk}
In stark contrast with the Gaussian case, there is multiplicative renormalization, and only that (no counterterms needed).
\end{rk}

\subsubsection{The Gaussian field}

Here, $\hat f_n(\xi) = \e^{-\sigma_n\xi^2/2}$ with $\sigma_n=r^n\sigma_0$\lo. Suppose for simplicity that $\sigma_0=1$, so that $\hat f_0^{(k)} = (-1)^kH^k(\xi)\hat f_0$ where $H^k(\xi)$ is the $k$-th Hermite polynomial. \autoref{renormalizability} is satisfied thanks to the recursion relation
\[
\xi H^k(\xi) = kH^{k-1}(\xi) + H^{k+1}(\xi).
\]
Let $f=f_0$\lo.
Since $\hat f^\lambda = \hat f(\lambda^{1/2}\cdot)$, one has
\begin{align*}
(\hat f^\lambda)^{(k)} &= \lambda^{k/2}\hat f^{(k)}(\lambda^{1/2}\cdot) = \lambda^{k/2} \bigl((-1)^k H^k(\lambda^{1/2}\cdot) \hat f\bigr)\hat f^{\lambda-1} \\
	&= (-1)^k\lambda^{k/2}\hat f^{\lambda-1} \sum_{i=0}^{\lfloor k/2\rfloor} \lambda^{k/2-i}(\lambda-1)^i{k\choose 2i}\frac{(2i)!}{2^{i}i!} H^{k-2i}\hat f \\
	&= \lambda^k\hat f^{\lambda-1}\sum_{i=0}^{\lfloor k/2\rfloor} \left(1-\lambda^{-1}\right)^i{k\choose 2i}\frac{(2i)!}{2^{i}i!}  \hat f^{(k-2i)}
\end{align*}
thanks to the multiplication theorem---which, in terms of $R$, takes the form
\[
\begin{pmatrix} 1 \\ H^1(\lambda^{1/2}\cdot) \\ H^2(\lambda^{1/2}\cdot) \\ \vdots \end{pmatrix} = \lambda^{k/2}R(\lambda)^{-1} \begin{pmatrix} 1 \\ H^1 \\ H^2 \\ \vdots \end{pmatrix}.
\]
Let us work out $\mc V_2, \mc V_3$ and $\mc V_4$. Explicitely, 
\[ 
R^4(\lambda)^{-1} 
	= \begin{pmatrix} 
1 & & & & \\
0 &\lambda & & & \\
\lambda^2(1-\lambda^{-1}) &0 &\lambda^2 & & \\
0 &3\lambda^3(1-\lambda^{-1}) &0 &\lambda^3 & \\
3\lambda^{4}(1-\lambda^{-1})^2 &0 &6\lambda^4(1-\lambda^{-1}) &0 &\lambda^4 \\
		\end{pmatrix}
\] 
and one can also check that $R^4(r^{-n-m},r^{-n})$ equals
\[
\left(\begin{smallmatrix}
1 & & & & \\
0 &r^{-m} & & & \\
r^{-(n+m)}(1-r^{-m}) &0 &r^{-2m} & & \\
0 &-3r^{-(n+m)}r^{-m}(1-r^{-m}) &0 &r^{-3m} & \\
3r^{-2(n+m)}(1-r^{-m})^2 &0 &-6r^{-(n+m)}r^{-2m}(1-r^{-m}) &0 &r^{-4m}
\end{smallmatrix}\right).
\]
Thus,
\begin{align*}
\EE{(x_{ij})^2}[x^n] &= (-\I r^{n+m})^2 \left( -r^{-(n+m)}(1-r^{-m}) + r^{-2m}(\I r^{-n})^2x_i^2 \right) \\
	&= x_i^2 + r^{n+m}(1-r^{-m}) \\
\EE{(x_{ij})^3}[x^n] &= (-\I r^{n+m})^3 \left( -3r^{-(n+m)}r^{-m}(1-r^{-m})(\I r^{-n})x_i + r^{-3m}(\I r^{-n})^3x_i^3 \right) \\
	&= x_i^3 + 3r^{n+m}(1-r^{-m})x_i \\
\EE{(x_{ij})^4}[x^n] &= (-\I r^{n+m})^4 \left( 3r^{-2(n+m)}(1-r^{-m})^2  \right. \\ &\pushright{ \left.{} - 6r^{-(n+m)}r^{-2m}(1-r^{-m})(\I r^{-n})^2x_i^2 + r^{-4m}(\I r^{-n})^4x_i^4 \right) } \\
	&= x_i^4 + 6r^{n+m}(1-r^{-m})x_i^2 + 3\left(r^{n+m}(1-r^{-m})\right)^2 
\end{align*}
and we see that we can take
\begin{align*}
\mc V^n_2(x) &= \lim_{m\rightarrow\infty} \EE{(x_{ij})^2-r^{n+m}}[x^n] 
	= x_i^2 - r^n \\
\mc V^n_3(x) &= \lim_{m\rightarrow\infty} \EE{(x_{ij})^3 - 3r^{n+m}x_{ij}}[x^n] 
	= x_i^3-3r^nx_i \\
\mc V^n_4(x) &= \lim_{m\rightarrow\infty} \EE{(x_{ij})^4-6r^{n+m}(x_{ij})^2+3r^{2(n+m)}}[x^n] \\
	&= x_i^4 - 6r^nx_i^2 + 3r^{2n} + 6r^{n+m}x_i^2 + 3r^{2(n+m)} - 6r^{n+m}r^n \\
	&\pushright{ {} - 6r^{n+m}x_i^2 - 6r^{2(n+m)} + 6r^{n+m}r^n + 3r^{2(n+m)} } \\
	&= x_i^4 - 6r^nx_i^2 + 3r^{2n}
\end{align*}
just as in the discrete case.
Observe how the final result is obtained by just dropping the divergent part in $\EE{(x_{ij})^k}[x^n]$.

\subsubsection{Further examples}

The Poisson and Meixner distributions also satisfy \autoref{renormalizability}.
As for infinitely divisible distributions which do not, we mention: Cauchy, Sym $\Gamma$, compound Poisson with exponentially distributed summands, and compound Poisson with Gaussian summands. 
Also, if one adds a diffusion term to the Lévy characteristic of a $\Gamma$, Poisson or Meixner distribution, the result stops satisfying \autoref{renormalizability}. 

\subsection{Wick ordering the kinetic energy} \label{kinetic energy renormalization}

\subsubsection{Conditional expectation of the approximate kinetic energy}

Consider the lattice approximation to the kinetic energy
\[ 
T_\text{app}^n(x^n) = \frac1{r^n} \sum_i\sum_{\abs{i'-i}=1} \frac12 \left( \frac{x_{i'}-x_i}{\eps^n} \right)^2 = 
\frac1{r^n} \sum_{i,i'} \eps^{-2n} c_{ii'}x_ix_{i'}
\] 
where $c_{ii'}$ equals: $d$ if $i=i'$; $-1$ if $i$ and $i'$ are nearest neighbours; and 0 otherwise. We will assume that $S$ is a compact manifold obtained from $[0,1]^d$ by some glueing of the boundary, so that counting neighbours can be easily done: every site has exactly $2d$ neighbours.

\begin{theo}
\[
\EE{T^{n+m}_\mathrm{app}}[x^n] =
	\eps^{-n-m}\left( \eps^nT_\mathrm{app}^n(x^n) + d\eps^{-n} T_\mathrm{shift}^{n,m}(x^n) \right)
\]
where
\(
T_\mathrm{shift}^n(x^n) = \frac{1}{r^n}\sum \mc T_\mathrm{shift}^{n,m}(x_i)
\)
and
\[
\mc T_\mathrm{shift}^{n,m}(x_i) = \eps^{-m} \EE{y_{ij}^2}[x^n] - (\eps^{-m}-1)\EE{y_{ij}y_{ij'}}[x^n].
\]
\end{theo}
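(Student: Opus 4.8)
The plan is to write the lattice kinetic energy as a sum over nearest-neighbour bonds of the level-$(n+m)$ lattice, substitute $x_{ij}=x_i+y_{ij}$, take the conditional expectation $\EE{\cdot}[x^n]$ bond by bond, and then sort the bonds according to whether their two endpoints lie in the same coarse cell or in two adjacent ones.

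Concretely, I would use the form
\[
T_\mathrm{app}^{n+m}(x^{n+m}) = \frac{\eps^{-2(n+m)}}{2\,r^{n+m}}\sum_{\langle I,I'\rangle}\bigl(x_I - x_{I'}\bigr)^2,\qquad I=(i,j),\ I'=(i',j'),
\]
the sum running over unordered nearest-neighbour pairs — equivalently the $c_{II'}$-form used in \autoref{kinetic energy renormalization}. Call a bond $\langle I,I'\rangle$ \emph{internal} when $i=i'$, so that $(x_I-x_{I'})^2=(y_{ij}-y_{ij'})^2$, and \emph{crossing} when $i\neq i'$ (then $i,i'$ are adjacent), so that $(x_I-x_{I'})^2=(x_i-x_{i'})^2+2(x_i-x_{i'})(y_{ij}-y_{i'j'})+(y_{ij}-y_{i'j'})^2$. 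Applying $\EE{\cdot}[x^n]$: the linear piece of a crossing bond vanishes because $\EE{y_{ij}}[x^n]=0$ (the ``in particular'' clause of \autoref{cond_exp_of_powers}; it is in any case immediate from permutation symmetry of the conditional law of $y_i^m$ together with $\sum_j y_{ij}=0$). Writing $A(x_i)=\EE{y_{ij}^2}[x^n]$ and, for $j\neq j'$, $B(x_i)=\EE{y_{ij}y_{ij'}}[x^n]$ — both well defined by that same permutation symmetry — an internal bond then contributes $2A(x_i)-2B(x_i)$, while a crossing bond contributes $(x_i-x_{i'})^2+A(x_i)+A(x_{i'})$, the last two terms because $\EE{y_{ij}y_{i'j'}}[x^n]=\EE{y_{ij}}[x^n]\,\EE{y_{i'j'}}[x^n]=0$ for $i\neq i'$ (increments in distinct coarse cells are independent given $x^n$, by the product structure $\D\mu^{n+m}=\prod_i f_n^m(x_i,y_i^m)\,\D x_i\,\D y_i^m$). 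Note that $\eps^{-m}A(x_i)-(\eps^{-m}-1)B(x_i)=\mc T_\mathrm{shift}^{n,m}(x_i)$.

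It remains to count bonds and assemble. Across each face shared by two adjacent coarse cells there are exactly $\eps^{-m(d-1)}$ crossing bonds, so summing the $(x_i-x_{i'})^2$ parts over all crossing bonds gives $\eps^{-m(d-1)}$ copies of the level-$n$ Dirichlet form $\sum_{\langle i,i'\rangle}(x_i-x_{i'})^2$; together with the prefactor this is the $\eps^{-n-m}\,\eps^{n}T_\mathrm{app}^n(x^n)$ summand. For the remaining ($y$-dependent) parts, each coarse cell contains $d(\eps^{-m}-1)\eps^{-m(d-1)}$ internal bonds, and $2d\,\eps^{-m(d-1)}$ crossing bonds touch it (one for each of the $\eps^{-m(d-1)}$ fine faces on each of its $2d$ faces), so cell $i$ altogether contributes
\[
d(\eps^{-m}-1)\eps^{-m(d-1)}\bigl(2A(x_i)-2B(x_i)\bigr)+2d\,\eps^{-m(d-1)}A(x_i) = 2d\,\eps^{-m(d-1)}\,\mc T_\mathrm{shift}^{n,m}(x_i).
\]
Multiplying by $\eps^{-2(n+m)}/(2r^{n+m})$, summing over $i$ (so that $\sum_i\mc T_\mathrm{shift}^{n,m}(x_i)=r^n T_\mathrm{shift}^{n,m}(x^n)$) and collecting the powers of $\eps$ and $r$ yields the $\eps^{-n-m}\,d\,\eps^{-n}T_\mathrm{shift}^{n,m}(x^n)$ summand, and adding the two pieces proves the claim.

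The step I expect to be the real obstacle is this last, bookkeeping one: partitioning the bonds of the level-$(n+m)$ lattice cleanly into internal and crossing, counting each class correctly — with due care at the cell boundaries, where a fine site has fewer than $2d$ of its neighbours inside its own coarse cell — and then keeping the exponents $\eps^{-2(n+m)}$, $\eps^{-m(d-1)}$ and $r^{n+m}$ straight, so that the prefactors come out as $\eps^{-n-m}\eps^{n}$ and $\eps^{-n-m}d\,\eps^{-n}$ exactly rather than off by a constant or a power of $\eps$. The two inputs from the structure of $\mu$ — that $A$ and $B$ depend on $x_i$ only, and that cross-cell correlations vanish — are not hard but must be stated; both follow at once from the product form of $\mu$ in physical space coordinates.
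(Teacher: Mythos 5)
Your proposal is correct and follows essentially the same route as the paper's proof: expand $x_{ij}=x_i+y_{ij}$, kill the linear terms via $\EE{y_{ij}}[x^n]=0$ and the cross-cell terms via conditional independence of increments in distinct coarse cells, and then reduce everything to counting internal versus crossing contributions, with your final coefficients $\eps^{-n-m}\eps^{n}$ and $\eps^{-n-m}d\,\eps^{-n}$ agreeing with the statement. The only difference is organizational --- you count bonds per direction and per shared face, whereas the paper counts, for each fine site, how many of its neighbours lie in the same coarse cell (via the ${d\choose k}2^{d-k}(\eps^{-m}-2)^k$ census) --- and both tallies yield the same result.
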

\begin{proof}
Write
\begin{align*}
\mathbb E\bigl[T^{n+m}_\text{app} \bigm| x^n\bigr] 
	&= \frac1{r^{n+m}} \sum_{i,j,i',j'} \eps^{-2(n+m)} c_{iji'j'} \EE{(x_i+y_{ij})(x_{i'}+y_{i'j'})}[x^n] \\
	&=: \frac1{r^n}\sum_{i,i'} \eps^{-2n} \tilde c_{ii'}^m(x_i\lo,x_{i'})
\end{align*}
where $c_{iji'j'}$ equals: $d$ if $(i,j)=(i',j')$; $-1$ if $(i,j)$ and $(i',j')$ are nearest neighbours; and 0 otherwise. Thus,
\begin{align*}
\tilde c_{ii'}^m(x_i\lo,x_{i'}) 
	&= \frac1{r^m} \sum_{j,j'} \eps^{-2m} c_{iji'j'} \EE{x_ix_{i'} + (x_iy_{i'j'}+x_{i'}y_{ij}) + y_{ij}y_{i'j'}}[x^n] \\
	&= \frac1{r^m} \sum_{j,j'} \eps^{-2m} c_{iji'j'} \left(x_ix_{i'} + \EE{y_{ij}y_{i'j'}}[x^n] \right),
\end{align*}
thanks to \autoref{cond_exp_of_powers}.
Let's add up the terms not involving conditional expectations. If $i=i'$, get
\begin{align*}
&\frac{\eps^{-2m}}{r^m} x_i^2\biggl\{ dr^m - \frac12 \sum_{k=0}^{d} (d+k)\cdot\underbrace{\#(\text{hypercubes with $d+k$ neighbours})}_{{d\choose k} 2^{d-k}(\eps^{-m}-2)^k} \biggr\} \\
	&\qquad= \frac{\eps^{-2m}}{r^m} x_i^2 \biggl\{ dr^m - \frac12\biggl( d\eps^{-md} + \sum_{k=0}^d k{d\choose k} 2^{d-k} (\eps^{-m}-2)^k \biggr) \biggr\} \\
	&\qquad= \frac{\eps^{-2m}}{r^m} x_i^2 \biggr\{ dr^m - \frac12\left( d\eps^{-md} + d\eps^{-md} - 2d\eps^{-m(d-1)} \right) \biggr\} = \eps^{-m}dx_i^2\Lo,
\end{align*}
whereas if $i\neq i'$, get
\[
-\frac{\eps^{-2m}}{r^m}x_ix_{i'} \eps^{-m(d-1)} = -\eps^{-m} x_ix_{i'}\loo.
\]
Then, note that $\EE{y_{ij}y_{i'j'}}[x^n]$ vanishes if $i\neq i'$, so 
\begin{align*}
&\EE{T^{n+m}_\text{app}}[x^n] = \eps^{-m}T_\text{app}^n  \\
	&\pushright{{} + \frac{\eps^{-2n}}{r^n} \sum_{i} \frac{\eps^{-2m}}{r^m} \left\{ dr^m \EE{y_{ij}^2}[x^n] - (d\eps^{-md} - d\eps^{-m(d-1)}) \EE{y_{ij}y_{ij'}}[x^n] \right\} } \\
	&\qquad= \eps^{-n-m}\left( \eps^nT_\text{app}^n + \eps^{-n} \frac{d}{r^n}\sum \left\{ \eps^{-m} \EE{y_{ij}^2}[x^n] - (\eps^{-m}-1)\EE{y_{ij}y_{ij'}}[x^n] \right\} \right),
\end{align*}
as claimed.
\end{proof}
\begin{rk}
Let us pause for a moment to reflect on the fact that $\EE{T_\text{app}^{n+m}}[x^n]$ equals $\eps^{-m} T_\text{app}^n$ plus some (divergent) mass corrections. We see, hence, that in renormalizing the kinetic energy both additive and \emph{multiplicative} renormalization will be required---which is the reason why we explicitely factored out the $\eps^{-n-m}$ in $\EE{T_\text{app}^{n+m}}[x^n]$. Thus, we expect that there exists some cylinder density $T_\text{ren}$ such that
\[
T_\text{app}^n = \eps^{-n}T_\text{ren}^n + O(\eps^{-2n}).
\]
This indicates that if we simply make $n\rightarrow\infty$ in 
\(
C_n\e^{-T_\text{app}^n(x)} \mu^n(\D x),
\)
we would be studying some sort of ``weak coupling'' limit. 
But that would not be the most natural thing to do in this context, for clearly $T_\text{ren}$ is the natural perturbation of a continuous product measure. This strongly reinforces the idea that perturbing a continuous product measure (in physical space coordinates) is a promising strategy for the study of field theories in strong coupling regimes.
\end{rk}

Let us compute 
\begin{align*}
&\EE{y_{ij}y_{ij'}}[x^n] 
	= -x_i^2 + \EE{x_{ij}x_{ij'}}[x^n] \\
	&\qquad = -x_i^2 + \frac{r^m}{f_n(x_i)} \iint\D y_1\D y_2\, f_{n+m}(y_1)f_{n+m}(y_2)f_{n+m}^{*r^m-2}(r^mx_i-y_1-y_2)y_1y_2 \\
	&\qquad = -x_i^2 + \frac{r^m}{f_n(x_i)} \int\D y_1\frac{r^m}{r^m-1} f_{n+m}(y_1) f_{n+m}^{*(r^m-1)}(r^mx-y_1)(x_i-y_1/r^m)y_1 \\
	&\qquad= \frac1{r^m-1}x_i^2 - \frac{1}{r^m-1}\sum_{\ell=0}^2 (-\I r^{n+m})^2 R_{2,\ell}(r^{-n-m},r^{-n}) (\I r^{-n}x_i)^\ell \\
	&\qquad= \frac1{r^m-1}\left( x_i^2 - \EE{x_{ij}^2}[x^n] \right).
\end{align*}
So, noting that $\EE{y_{ij}^2}[x^n] = \EE{x_{ij}^2}[x^n] - x_i^2$\Lo, we can also write 
\begin{align*}
\mc T^{n,m}_\mr{shift}(x_i) &= \eps^{-m} \frac{r^m}{r^m-1} \left( \EE{x_{ij}^2}[x^n] - x_i^2 \right) - \frac1{r^m-1}\left( \EE{x_{ij}^2}[x^n] - x_i^2 \right) \\
	&= \frac{\eps^{-m}r^m-1}{r^m-1} \left( \EE{x_{ij}^2}[x^n] - x_i^2 \right).
\end{align*}
At any rate, $\mc T_\text{shift}^{n,m}$ is an (at most) quadratic polynomial. Thereby, quadratic counterterms will be needed to renormalize the kinetic energy. Let us see a couple of examples.

\subsubsection{Some examples}

We start with the $\Gamma$ field. We have
\[ 
\mc T_\text{shift}^{n,m}(x_i) 
	= \frac{\eps^{-m}r^m-1}{r^m-1} \left(\frac{r^m(\alpha_{n+m}+1)}{\alpha_n+1} - 1 \right) x_i^2 
	=  \frac{\eps^{-m}r^m-1}{\alpha_n+1} x_i^2
\] 
so that we can define
\begin{align*}
T_\text{ren}^n &= \lim_{m\rightarrow\infty} \EE{\eps^{n+m}T_\text{app}^{n+m} - \frac{d\eps^{-(n+m)}}{\alpha_{n+m}+1}\frac{1}{r^{n+m}} \sum_{i,j} x_{ij}^2}[x^n] \\
	&= \lim_{m\rightarrow\infty} \left\{  \eps^nT_\text{app}^n - \frac{d\eps^{-n} }{\alpha_n+1} \frac{1}{r^n}\sum x_i^2 + \frac{d\eps^{-(n+m)}r^m}{\alpha_n+1}\frac{1}{r^n}\sum x_i^2 \right. \\
	&\pushright{ \left. {}  - \frac{d\eps^{-(n+m)}}{\alpha_{n+m}+1} \frac{1}{r^{n+m}} \sum\EE{x_{ij}^2}[x^n] \right\} } \\
	&= \eps^nT_\text{app}^n - \frac{d\eps^{-n}}{\alpha_n+1} \frac{1}{r^n}\sum x_i^2\loo.
\end{align*}
As for the Gaussian field,
\[ 
\mc T_\text{shift}^{n,m} 
	= \frac{\eps^{-m}r^m - 1}{r^m-1} r^{n+m}(1-r^{-m})\sigma_0 
	= r^n\sigma_0(\eps^{-m}r^m-1)
\] 
and we define
\[ 
T_\mathrm{ren}^n = \lim_{m\rightarrow\infty} \EE{\eps^{n+m}T_\mathrm{app}^{n+m} - d\sigma_0\eps^{-n-m}r^{n+m}}[x^n] 
	= \eps^nT_\mathrm{app}^n - d\sigma_0\eps^{-n}r^n.
\] 
Finally, we mention that for the Cauchy field  it is also possible to put the kinetic energy in Wick order, even if it does not satisfy \autoref{renormalizability}. We  leave the calculation to the interested reader, for it does not have any relevance here.

\section{Calculation of the effective Lagrangian}

Now we consider the problem of the existence (and compatibility) of the effective measures
\[
\lim_{m\rightarrow\infty} \EE{\e^{-L_\text{ren}^{n+m}}}[x^n]
\]
for some given first-order renormalized, remaining\footnote{Recall that part of the full Lagrangian has been absorbed into the reference measure.} Lagrangian $L_\text{ren}$\lo. 
We would like this to be given by $\e^{-L_\text{ren}^n}$ modified by some non-local corrections, which turn out to be best understood as multiplicative (as opposed to additive)---but of course are additive at the level of the Lagrangian.
We work first at a formal power series level.
So, we want an $L_\mathrm{eff}^n[\lambda]$ such that
\[
\e^{L_\mathrm{eff}^n[\lambda]} = \lim_{m\rightarrow\infty} \EE{\e^{\lambda L_\mathrm{ren}^{n+m}}}[x^n].
\]
Clearly, $L_\mathrm{eff}[\lambda] = \lambda L_\mathrm{ren} + \text{higher-order terms}$.
\begin{rk}
If the corrections diverge as $m\rightarrow\infty$, higher-order renormalization would be needed.
\end{rk}

Let us implement this. Define the Lagrangian (conditional) cumulants by
\[
L_{\mathrm{eff},k}^{n,m}(x^n) = \left.\frac{\D^k}{\D\lambda^k}\right|_{\lambda=0} \log \EE{\e^{\lambda L_\text{ren}^{n+m}}}[x^n],
\]
so that
\[
\EE{\e^{\lambda L_\text{ren}^{n+m}}}[x^n] = \exp\left( \sum_{k=1}^\infty \frac{\lambda^k}{k!}  L_{\mathrm{eff},k}^{n,m}(x^n) \right) = \e^{\lambda L_\text{ren}^n(x^n)} \exp\left( \sum_{k=2}^\infty \frac{\lambda^k}{k!} L_{\mathrm{eff},k}^{n,m}(x^n) \right)
\]
and the $k$-th non-local correction factor is $\exp\bigl(\frac{\lambda^k}{k!} L_{\mathrm{eff},k}^{n,m}(x^n)\bigr)$.
It is known that cumulant combinatorics are such that
\begin{equation} \label{eq:Ltilde}
\EE{\left( L_\text{ren}^{n+m} \right)^k}[x^n] = \sum_{\text{partitions of } k} L_{\mathrm{eff},k_1}^{n,m}\,\cdots\, L_{\mathrm{eff},k_p}^{n,m}
\end{equation}
and, conversely,
\begin{align*}
L_{\mathrm{eff},k}^{n,m} 
	&= \EE{\left( L_\text{ren}^{n+m} \right)^k}[x^n] - \sum_{\ell=1}^{k-1} {k-1\choose \ell-1} L_{\mathrm{eff},\ell}^{n,m}\, \EE{\left( L_\text{ren}^{n+m} \right)^{k-\ell}}[x^n] \\
	&= \sum_{\text{partitions of } k} (p-1)!(-1)^{p-1} \EE{\left(L_\text{ren}^{n+m}\right)^{k_1}}[x^n] \cdots \EE{\left(L_\text{ren}^{n+m}\right)^{k_p}}[x^n].
\end{align*}

\begin{rk}
By Dyson's argument, the generating function of the Lagrangian cumulants is only formally defined (the power series will not be convergent). Note, however, that we can well have convergence for the full non-local correction it entails.
\end{rk}

From now on, we focus on the case of a reference measure in physical space coordinates perturbed by a quadratic remaining Lagrangian, such as the kinetic energy plus a mass term. 

\subsection{Connected graph expansion for quadratic perturbations}

Consider a Lagrangian of the form
\(
L_\text{ren}^n(x^n) = \frac1{r^n} \sum_{i,i'} c^n_{ii'}x_ix_{i'}\loo.
\)
In order to treat more general Lagrangians, one would have to use hyper-graphs whose hyper-edge vertices are decorated with an integer (the exponent of the corresponding $x_i$).

\begin{defn}
Fix a system $\cdots \rightarrow X^2 \rightarrow X^1 \rightarrow X^0$ whose connecting maps are associated with progressively refined, orthogonal and complete families of projections 
\[
P^n = \set{p_{i_1\dots i_n} | i_k=1\dots r} \subseteq L^\infty(S).
\]
An \emph{edge of resolution $n$} is an unordered pair
\(
e\in E^n := P^n\times P^n/(p_0\lo,p_1)\sim(p_1\lo,p_0).
\)
The equivalence class of $(p_0\lo,p_1)$ in $E^n$ will be written $\Lbag p_0\lo,p_1\Rbag$.
A \emph{(multi-)graph of size $k$ and resolution $n$} is an element of the free, commutative, graded semigroup 
\[
G^n=\bigcup_{k\in\N} G^{n,k},\quad G^{n,k} = \Set{ e_1\cdots e_k | e_i\in E^n }
\] 
generated by $E^n$. 
Now fix $n, m\in\N$ and take $e = \Lbag p_0\lo,p_1\Rbag\in E^{n+m}$. 
The \emph{coarse version (of resolution $n$)} of $e$ is the edge 
\[
P^n\bigl(\Lbag p_0\lo,p_1\Rbag\bigr) = \Lbag q_0\lo,q_1\Rbag\in E^n,\quad q_0 p_0 = p_0\ \text{ and }\ q_1 p_1 = p_1\lo.
\] 
Observe that these conditions determine $q_0$ and $q_1$ uniquely.
In the same way there is a coarse version $P^n(g)\in G^n$ of every graph $g\in G^{n+m}$, namely
\[
P^n(e_1\cdots e_k) = P^n(e_1)\cdots P^n(e_k).
\]
We say that a graph is \emph{(path) connected} if it is connected, in the sense that for any two projections $p, p'$ which are endpoints of edges in the graph, there exists a sequence of adjacent edges connecting them. Correspondingly, we say that $g\in G^{n+m}$ is \emph{coarsely connected,} with $n$ and $m$ understood from context, if $P^n(g)$ is connected. 
Write 
\[
G^{n+m,k}_{\mathrm{c}|n} = \Set{ g\in G^{n+m,k} | P^n(g) \text{ is connected} },\quad G^{n,k}_\mathrm{c} = G^{n,k}_{\mathrm{c}|n}\LOO.
\]
Finally, a \emph{partition} (or \emph{factorization}) of $g$ is a partition of its multiset of edges; thus, a partition is a collection of graphs $g_1\lo,\dots,g_p$ such that $g = g_1 \cdots g_p$\lo.
We will write $h\leq g$ if $h$ is a factor of $g$ and $h<g$ if $h\leq g$ and $h\neq g$.
\end{defn}

Our quadratic Lagrangian determines \emph{adjacency matrices} $c^n: E^n\rightarrow\R$ by $c^n\bigl(\Lbag p_i\lo,p_{i'}\Rbag\bigr) = c^n_{ii'}$\LO. 
Given $g = e_1\cdots e_k\in G^n$, write
\[
c^n(g) = \prod_{i=1}^k c^n(e_i),\quad x^n(g) = \prod_{i=1}^k x^n(e_i)
\]
where $x^n\bigl(\Lbag p_i\lo, p_{i'}\Rbag\bigr) = x_ix_{i'}$\lo. With this conventions,
\[
L_\mr{ren}^n(x^n) = \frac1{r^n}\sum_{e\in E^n} c^n(e)x^n(e).
\]
Observe that 
\[
c^{n}(gg') = c^{n}(g)c^{n}(g'),\quad x^n(gg') = x^n(g)x^n(g'). 
\]
Also, for coarsely disconnected $g$ (i.e.\ $P^n(g)$ disconnected), say $g=g_1g_2$ with $g_1$ a coarsely connected component, one has
\[
\EE{x^{n+m}(g_1g_2)}[x^n] = \EE{x^{n+m}(g_1)}[x^n] \EE{x^{n+m}(g_2)}[x^n].
\]

\begin{defn}
Let $\C[x^{n}]$ be the algebra of polynomials on the variables $\set{ x_{i_1\dots i_n} }$. We define the applications $\chi^n, \chi_\mr{c}^n: G^{n+m}\rightarrow \C[x^n]$ by 
\[
\chi^n(g) = \EE{x^{n+m}(g)}[x^n],\quad 
\chi_\mr{c}^n(g) = \sum_{\text{partitions}} (p-1)!(-1)^{p-1} \chi^n(g_1) \cdots \chi^n(g_p).
\]
\end{defn}

\begin{prop} \label{Ec recurrence}
Let $g\in G^{n+m}$ and choose an edge $e\leq g$.
The following recursive relation holds:
\[ 
\chi^n_\mr{c}(g)
	= \chi^n(g) - \sum_{e\leq h<g} \chi_\mr{c}^n(h) \chi^n(g/h).
\] 
\end{prop}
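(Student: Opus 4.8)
The plan is to establish the identity by Möbius-type inversion on the lattice of factorizations of $g$, using the cumulant combinatorics that define $\chi^n_\mathrm{c}$. First I would recall that, by definition,
\[
\chi^n_\mathrm{c}(g) = \sum_{\text{partitions of }g} (p-1)!(-1)^{p-1} \chi^n(g_1)\cdots\chi^n(g_p),
\]
which is exactly the relation between moments and cumulants (with $\chi^n$ playing the role of moments and $\chi^n_\mathrm{c}$ that of cumulants, relative to the commutative product of graphs). The corresponding forward relation, analogous to~\eqref{eq:Ltilde}, is
\[
\chi^n(g) = \sum_{\text{partitions of }g} \chi^n_\mathrm{c}(g_1)\cdots\chi^n_\mathrm{c}(g_p).
\]
The strategy is to take this forward relation, distinguish the block of the partition that contains the fixed edge $e$, and sum over the rest.

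Concretely, in the partition $g = g_1\cdots g_p$ appearing in the forward relation, let $h$ be the unique factor $g_j$ with $e\leq g_j$. Then the remaining factors form an arbitrary factorization of the complementary graph $g/h$. Hence
\[
\chi^n(g) = \sum_{e\leq h\leq g} \chi^n_\mathrm{c}(h) \sum_{\text{partitions of }g/h} \chi^n_\mathrm{c}(k_1)\cdots\chi^n_\mathrm{c}(k_q) = \sum_{e\leq h\leq g} \chi^n_\mathrm{c}(h)\,\chi^n(g/h),
\]
where in the last step I re-apply the forward relation to $g/h$ (with the convention that the empty graph contributes $\chi^n(\text{empty})=1$). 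Separating the term $h=g$, for which $g/h$ is empty, gives
\[
\chi^n(g) = \chi^n_\mathrm{c}(g) + \sum_{e\leq h<g} \chi^n_\mathrm{c}(h)\,\chi^n(g/h),
\]
which rearranges to the claimed recursion.

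The main thing to be careful about — and what I expect to be the only real obstacle — is the bookkeeping of multiplicities: $G^{n+m}$ is the \emph{free commutative} semigroup on $E^{n+m}$, so $g$ is a multiset of edges and a "partition" is a partition of that multiset. One must check that the factor-with-multiplicity conventions are consistent, i.e.\ that "choose the block containing a fixed occurrence of $e$" is well-defined and that summing over partitions of $g/h$ (as a multiset quotient) reproduces $\chi^n(g/h)$ without over- or under-counting. Once the multiset combinatorics are set up so that the moment–cumulant inversion holds in the form above — which is standard, since the exponential formula for multisets is just the multivariate version of the classical one — the identity follows immediately; I would also note in passing that the derivation uses only the algebraic inversion and not any property of $\EE{\,\cdot\,}[x^n]$ specific to the reference measure, so the same proof works verbatim with $\chi^n,\chi^n_\mathrm{c}$ replaced by the Lagrangian moments and cumulants of~\eqref{eq:Ltilde}.
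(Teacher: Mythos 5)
Your argument is correct, but it runs the logic in the opposite direction from the paper. The paper proves the recursion directly from the defining formula $\chi_\mr{c}^n(g)=\sum(p-1)!(-1)^{p-1}\chi^n(g_1)\cdots\chi^n(g_p)$: it substitutes this for $\chi_\mr{c}^n(h)$ inside $\sum_{e\leq h<g}\chi_\mr{c}^n(h)\chi^n(g/h)$ and observes that each partition of $g$ into $p+1$ blocks is produced exactly $p$ times (once for each choice of which non-distinguished block plays the role of $g/h$), so the coefficients recombine to $p!(-1)^p$, which is precisely the cumulant coefficient for $p+1$ blocks; the forward relation $\chi^n(g)=\sum\chi_\mr{c}^n(g_1)\cdots\chi_\mr{c}^n(g_p)$ is then deduced \emph{afterwards}, as the Corollary, by induction using the recursion. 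You instead take that forward relation as your starting point and extract the recursion by isolating the block containing the fixed occurrence of $e$ --- a cleaner manipulation that needs no explicit coefficients, and your closing remark that the argument is purely algebraic (hence applies verbatim to the Lagrangian cumulants of~\eqref{eq:Ltilde}) is apt. The one thing to be careful about is that, within the paper's development, the forward relation is a consequence of this very proposition, so to avoid circularity you must justify it independently --- e.g.\ by M\"obius inversion on the partition lattice of the edge multiset, which is standard but should be stated as the input to your proof. Your flagged concern about multiset bookkeeping is genuine but equally present in the paper's own proof (both arguments require distinguishing a particular occurrence of $e$), and is resolved the same way on either route.
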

\begin{proof}
This follows from the fact that
\begin{align*}
&\chi^n(g) - \sum_{\substack{\text{graphs } h<g\\ \text{with } e\leq h}} \sum_{\substack{\text{partitions of } h\\ \text{ with, say, } e\leq h_1}} (p-1)!(-1)^{p-1} \chi^n(h_1) \underbrace{\chi^n(h_2) \cdots \chi^n(h_p) \chi^n(g/h)}_{\mathclap{\substack{g/h \text{ is one of these $p$ factors, so}\\ \text{this product shows up $p$ times} }}} \\
	&\qquad= \chi^n(g) + \sum_{\substack{\text{partitions of } g \text{ with}\\ \text{at least 2 factors}}} p!(-1)^p \chi^n(g_1)\cdots \chi^n(g_{p+1}) = \chi^n_\mr{c}(g). \hfill\qedhere
\end{align*}
\end{proof}
\begin{coro}
One has
\[ 
\chi^n(g)
	= \chi_\mr{c}^n(g) + \sum_{e\leq h<g} \chi_\mr{c}^n(h) \chi^n(g/h) 
	= \sum_{\text{partitions}} \chi^n_\mr{c}(g_1)\cdots \chi_\mr{c}^n(g_p).
\] 
\end{coro}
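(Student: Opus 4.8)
The first equality is nothing but \autoref{Ec recurrence} with the sum transposed to the other side, so it carries no content. The substance is the second identity,
\[
\chi^n(g) = \sum_{\text{partitions}} \chi^n_\mr{c}(g_1)\cdots \chi^n_\mr{c}(g_p),
\]
which is the graph-theoretic incarnation of the moment--cumulant inversion already used in \eqref{eq:Ltilde}. I would prove it by induction on the size $k$ of the graph $g$, using \autoref{Ec recurrence} (equivalently, the first equality just established) as the engine.

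For $k=1$ the graph $g=e$ admits only the trivial partition, and that same partition is the only one entering the definition of $\chi^n_\mr{c}$, so $\chi^n(e)=\chi^n_\mr{c}(e)$ and the claim holds. For the inductive step, fix an edge $e\leq g$ and start from
\[
\chi^n(g) = \chi^n_\mr{c}(g) + \sum_{e\leq h<g} \chi^n_\mr{c}(h)\,\chi^n(g/h).
\]
Every quotient $g/h$ appearing here has size at most $k-1$, so the inductive hypothesis rewrites $\chi^n(g/h)$ as a sum over partitions of $g/h$ of products of $\chi^n_\mr{c}$'s. Substituting, the term $\chi^n_\mr{c}(h)\,\chi^n(g/h)$ becomes the sum, over all partitions of $g$ whose block containing $e$ is exactly $h$, of the associated product of $\chi^n_\mr{c}$'s; and the leading term $\chi^n_\mr{c}(g)$ accounts for the partition $\{g\}$, i.e.\ the one whose $e$-block is all of $g$ (formally the case $h=g$, with $g/g$ the empty graph and its empty partition contributing the empty product $1$).

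The only point requiring care---and it is really the whole argument---is the combinatorial bookkeeping: each partition of $g$ is produced exactly once, because the block containing the distinguished edge $e$ is uniquely determined by the partition, so that specifying a partition of $g$ is the same as choosing a factor $h$ with $e\leq h\leq g$ together with a partition of the complementary graph $g/h$. Granting this bijection, summing over $e\leq h\leq g$ assembles precisely $\sum_{\text{partitions of }g}\prod_i\chi^n_\mr{c}(g_i)$, which is the desired formula. I do not expect any genuine obstacle beyond this bijection; it is a clean induction, the one thing to watch being the empty-graph conventions so that the $\chi^n_\mr{c}(g)$ term is correctly counted.
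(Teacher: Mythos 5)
Your argument is correct and is exactly the route the paper takes: the paper's proof consists of the single line ``easily proved by induction on the length of the graph,'' and your write-up simply supplies the details of that induction (the first equality being \autoref{Ec recurrence} rearranged, and the bijection between partitions of $g$ and pairs consisting of the block containing the distinguished edge $e$ and a partition of the complement). Nothing further is needed.
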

\begin{proof}
This is easily proved by induction on the length of the graph.
\end{proof}
\begin{coro} \label{Ec(disc)=0}
If $g$ is coarsely disconnected, $\chi_\mr{c}^n(g) = 0$. In particular,
\[
\chi_\mr{c}^n(g) = \sum_{\substack{\text{coarsely connected} \\ \text{partitions}}} \chi_\mr{c}^n(g_1)\cdots \chi_\mr{c}^n(g_p).
\]
\end{coro}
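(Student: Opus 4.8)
The plan is to establish the vanishing $\chi_\mr{c}^n(g)=0$ for coarsely disconnected $g$ by strong induction on the size $k$ of $g$, and then to read off the second assertion by feeding this into the factorization formula of the preceding corollary. The one structural input needed is the multiplicativity $\chi^n(g_1g_2)=\chi^n(g_1)\,\chi^n(g_2)$ whenever the edge multisets of $g_1$ and $g_2$ involve disjoint sets of coarse vertices; this is nothing but the factorization recorded immediately before the definition of $\chi^n$ (iterated over the coarse connected components), reflecting the conditional independence, given $x^n$, of the increment families $\{y_{ij}\}_j$ attached to distinct coarse blocks $i$. Everything else is bookkeeping driven by \autoref{Ec recurrence}.

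For the induction, let $g\in G^{n+m,k}$ be coarsely disconnected. Since each edge of $g$ has its two coarse endpoints in one and the same component of $P^n(g)$, I can fix a coarse connected component and let $g_1\leq g$ be the product of the edges of $g$ whose coarse image lies in it and $g_2$ the product of the rest; then $g=g_1g_2$, both factors are nonempty (there are at least two components), and $g_1,g_2$ have disjoint coarse vertex sets. Choose an edge $e\leq g_1$ and apply \autoref{Ec recurrence}:
\[
\chi_\mr{c}^n(g) = \chi^n(g) - \sum_{e\leq h<g}\chi_\mr{c}^n(h)\,\chi^n(g/h).
\]
Now split each $h$ occurring here as $h=h_1h_2$, where $h_1$ collects the edges of $h$ with coarse image in the chosen component and $h_2$ the others; since $e\leq h$ one has $h_1\neq\emptyset$. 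If $h_2\neq\emptyset$, then $h$ is coarsely disconnected with $h<g$, so $\chi_\mr{c}^n(h)=0$ by the induction hypothesis and the term drops. If $h_2=\emptyset$, then $h=h_1$ with $e\leq h_1\leq g_1$ (and $h<g$ is automatic, since $g_2\neq\emptyset$), and $g/h=(g_1/h_1)\,g_2$ where $g_1/h_1$, being a subgraph of $g_1$, still has coarse vertex set disjoint from that of $g_2$; hence $\chi^n(g/h)=\chi^n(g_1/h_1)\,\chi^n(g_2)$. Using also $\chi^n(g)=\chi^n(g_1)\chi^n(g_2)$ and factoring out $\chi^n(g_2)$,
\[
\chi_\mr{c}^n(g) = \chi^n(g_2)\left( \chi^n(g_1) - \sum_{e\leq h_1\leq g_1}\chi_\mr{c}^n(h_1)\,\chi^n(g_1/h_1) \right),
\]
the $h_1=g_1$ term contributing $\chi_\mr{c}^n(g_1)$ because $\chi^n(\emptyset)=1$. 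But \autoref{Ec recurrence}, applied now to $g_1$ with the same marked edge $e$, says precisely that the bracket vanishes; therefore $\chi_\mr{c}^n(g)=0$. The base cases $k\leq 1$ are vacuous, a graph of size $\leq 1$ being coarsely connected.

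For the ``in particular'' clause I would then substitute $\chi_\mr{c}^n(g)=0$ (for coarsely disconnected $g$) into the identity $\chi^n(g)=\sum_{\text{partitions}}\chi_\mr{c}^n(g_1)\cdots\chi_\mr{c}^n(g_p)$ of the preceding corollary: every partition with a coarsely disconnected factor contributes nothing, so only the partitions into coarsely connected factors survive, which is the displayed formula. I do not anticipate a genuine obstacle; the only care needed is in keeping the decomposition $h=h_1h_2$ compatible with the disjoint‑support hypothesis of the factorization — automatic here because $g_1/h_1\leq g_1$ — after which the combinatorial collapse is forced entirely by \autoref{Ec recurrence}, the substantive analytic fact (the conditional‑independence factorization of $\chi^n$) being already in hand.
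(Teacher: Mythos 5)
Your proof is correct and follows essentially the same route as the paper's: induction on the size of $g$, splitting off a coarsely connected component $g_1$, marking an edge $e\leq g_1$, killing the cross terms ($h$ meeting both $g_1$ and $g_2$) by the inductive hypothesis, and recognizing the surviving bracket as the recurrence of \autoref{Ec recurrence} applied to $g_1$ alone. The only differences are cosmetic — you spell out the $h=h_1h_2$ case analysis that the paper leaves implicit, and you start the induction at the vacuous $k\leq 1$ rather than at $g=e_1e_2$.
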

\begin{proof}
Let us do this by induction on the size of $g$. If $g=e_1e_2$ is coarsely disconnected,
\[
\chi_\mr{c}^n(e_1e_2) = \chi^n(e_1e_2) - \chi^n(e_1) \chi^n(e_2) = 0.
\]
Now, suppose that $g=g_1g_2$ with $g_1$ a coarsely connected component. Choosing an edge $e|g_1$ and using the inductive hypothesis,
\begin{align*}
\chi_\mr{c}^n(g) 
	&= \chi^n(g_1g_2) - \chi_\mr{c}^n(g_1) \chi^n(g_2) - \sum_{e\leq h<g_1} \chi_\mr{c}^n(h) \chi^n(g_1g_2/h)  \\
	&= \chi^n(g_1g_2) - \left(\chi_\mr{c}^n(g_1) + \sum_{e\leq h<g_1} \chi_\mr{c}^n(h) \chi^n(g_1/h) \right) \chi^n(g_2) \\
	&= \chi^n(g_1g_2) - \chi^n(g_1) \chi^n(g_2) = 0. \qedhere
\end{align*}
\end{proof}

\begin{theo}
One has that
\[ 
L_{\mathrm{eff},k}^{n,m} (x^n)
	= \frac1{r^{k(n+m)}} \sum_{G^{n+m,k}_{\mathrm{c}|n}} c^{n+m}(g) \chi_\mr{c}^n(g).
\] 
\end{theo}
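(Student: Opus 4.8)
The plan is to feed the edge expansion of $L_{\mr{ren}}^{n+m}$ into the moment--cumulant formula recorded just above the statement, then to reorganize the resulting multiple sum according to the total multigraph produced by the $k$ edge slots, at which point the coefficient of $c^{n+m}(g)$ is recognized as $\chi_{\mr{c}}^n(g)$, and finally to discard the coarsely disconnected graphs by \autoref{Ec(disc)=0}.

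Concretely, the first step is to use the multiplicativity $c^{n+m}(gg')=c^{n+m}(g)c^{n+m}(g')$ and $x^{n+m}(gg')=x^{n+m}(g)x^{n+m}(g')$ to write, for each $j$,
\[
\bigl(L_{\mr{ren}}^{n+m}\bigr)^j = \frac1{r^{j(n+m)}}\sum_{(e_1,\dots,e_j)\in (E^{n+m})^j} c^{n+m}(e_1\cdots e_j)\,x^{n+m}(e_1\cdots e_j),
\]
and hence, by linearity of the conditional expectation,
\[
\EE{\bigl(L_{\mr{ren}}^{n+m}\bigr)^j}[x^n] = \frac1{r^{j(n+m)}}\sum_{(e_1,\dots,e_j)\in (E^{n+m})^j} c^{n+m}(g)\,\chi^n(g),\qquad g=e_1\cdots e_j,
\]
which is the same as a sum over multigraphs $g\in G^{n+m,j}$ weighted by the number of orderings of their edge multiset.

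Substituting this into
\[
L_{\mathrm{eff},k}^{n,m} = \sum_{\text{partitions of }k}(p-1)!(-1)^{p-1}\,\EE{\bigl(L_{\mr{ren}}^{n+m}\bigr)^{k_1}}[x^n]\cdots\EE{\bigl(L_{\mr{ren}}^{n+m}\bigr)^{k_p}}[x^n]
\]
and collecting the edges chosen for the $k$ slots into a single multigraph $g\in G^{n+m,k}$, one sees that each partition of $k$ induces a factorization $g=g_1\cdots g_p$, so that
\[
L_{\mathrm{eff},k}^{n,m}(x^n) = \frac1{r^{k(n+m)}}\sum_{g\in G^{n+m,k}} c^{n+m}(g)\sum_{\text{partitions of }g}(p-1)!(-1)^{p-1}\chi^n(g_1)\cdots\chi^n(g_p),
\]
whose inner sum is $\chi_{\mr{c}}^n(g)$ by the definition of $\chi_{\mr{c}}^n$. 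Finally, \autoref{Ec(disc)=0} gives $\chi_{\mr{c}}^n(g)=0$ whenever $g$ is coarsely disconnected, so only $g\in G^{n+m,k}_{\mathrm{c}|n}$ survive, which is the asserted identity. One could also phrase the whole argument as the linked-cluster theorem for the conditional generating function $\EE{\e^{\lambda L_{\mr{ren}}^{n+m}}}[x^n]$, but the combinatorics are identical.

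The delicate point, and what I expect to be the \emph{main obstacle}, is the bookkeeping of symmetry factors in the passage between ordered edge tuples, multigraphs and their factorizations: one has to check that the number of orderings of an edge multiset, the multinomial coefficients implicit in expanding $(L_{\mr{ren}}^{n+m})^k$, and the weights $(p-1)!(-1)^{p-1}$ combine to reproduce exactly the sum over factorizations of $g$ that defines $\chi_{\mr{c}}^n(g)$, with no leftover constants and with the correct overall power $r^{-k(n+m)}$. The cleanest way to handle this is to carry out the entire computation at the level of \emph{labelled} graphs, i.e.\ functions from the $k$ edge slots into $E^{n+m}$, for which the moment--cumulant bookkeeping is the standard one over set partitions, and to project onto unlabelled multigraphs only at the end; this projection is compatible with every step because $c^{n+m}$, $x^{n+m}$ and $\chi^n$ depend only on the underlying multiset, while the vanishing of the disconnected contributions uses only the factorization $\EE{x^{n+m}(g_1g_2)}[x^n]=\EE{x^{n+m}(g_1)}[x^n]\,\EE{x^{n+m}(g_2)}[x^n]$ for coarsely disconnected $g_1g_2$ together with the recursion of \autoref{Ec recurrence}.
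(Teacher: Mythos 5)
Your proposal is correct and is essentially the paper's own argument run in the opposite direction: the paper expands $\EE{\left(L_\mathrm{ren}^{n+m}\right)^k}[x^n]$ over multigraphs, regroups the result as a product of sums over coarsely connected graphs using the corollaries to \autoref{Ec recurrence}, and reads off the cumulants by comparison with \eqref{eq:Ltilde}, whereas you substitute the same graph expansion into the explicit $(p-1)!(-1)^{p-1}$ inversion formula, recognize the inner sum as $\chi_\mr{c}^n(g)$ directly from its definition, and only then discard the coarsely disconnected graphs---the two routes being the same M\"obius inversion on the partition lattice. Your observation that the labelled-slot versus edge-multiset bookkeeping is the one delicate point is apt (the paper glosses over it), and carrying out the computation on labelled edge slots and projecting to multigraphs at the end, as you propose, is the right way to settle it.
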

\begin{proof}
Indeed,
\begin{align*}
\EE{\left(L_\text{ren}^{n+m}(x^{n+m})\right)^k}[x^n] 
	&= \frac1{r^{k(n+m)}} \sum_{g\in G^{n+m,k}} c^{n+m}(g)\EE{x^{n+m}(g)}[x^n] \\
	&= \sum_{k_1+\cdots+k_p=k} \prod_{\ell=1}^p \sum_{g_\ell\in G_{\mathrm{c}|n}^{n+m,k_\ell}}  \frac1{r^{k_\ell(n+m)}} c^{n+m}(g_\ell) \chi_\mr{c}^n(g_\ell), 
\end{align*}
from which the result follows by comparison with
~\eqref{eq:Ltilde}.
\end{proof}

\subsection{Graph expansion for a mass perturbation}

From now on, we work with the particular case of a $\Gamma$ reference field. Before getting into the business of adding a kinetic energy term, we consider the simpler situation in which the remaining Lagrangian is just a mass perturbation of the potential energy, i.e.\ has the form
\[
L_\mr{ren}^n(x^n) = \frac1{r^n}\sum \mc V^n_2(x_i) = \frac1{r^n}\sum \frac{\alpha_0 (\alpha_0+1)}{r^{2n} \alpha_n (\alpha_n+1)} x_i^2 = \frac1{r^n}\sum \frac{\alpha_0+1}{\alpha_0+r^n} x_i^2\Lo.
\]
\begin{rk}
Recall~\cite{unser2014introduction} that besides the Gaussian, there is no infinitely divisible distribution with tail
\[
f(x) = \e^{-O(\abs x^{1+\eps})},\quad \eps>0.
\]
Thus, if a mass perturbation is possible, the resulting effective Lagrangian cannot be a polynomial, and coefficients with negative sign must show up infinitely often in its power series expansion. In order to assess the factibility of introducing a mass perturbation, note that, by Jensen's inequality,
\[
\EE{\e^{-L_\mathrm{ren}}} \geq \int \e^{-\EE{L_\mathrm{ren}}[x^0]}f(x^0)\D x^0 >0
\]
and therefore what could fail is $\EE{\e^{-L_\mathrm{ren}}} = \infty$. This is also excluded in the case of a $\Gamma$ reference, for then the Wick polynomials are non-negative functions and therefore $\EE{\e^{-L_\mathrm{ren}}}<1$. 
So, we expect the limit measure to exist. Moreover, as we shall shortly see, the power series for the effective Lagrangian is convergent in this case.
\end{rk}

Define first, for a graph $h\in G^{n+m}$, the real numbers $\tilde \chi^n(h), \tilde \chi^n_\mr{c}(h)$ by the equations
\[
\chi^n(h) = \tilde \chi^n(h) x^n(P^nh),\quad \chi_\mr{c}^n(h) = \tilde \chi_\mr{c}^n(h) x^n(P^nh).
\] 
Note that
this is only possible thanks to the simple structure of the conditional expectation of a monomial on the  $\Gamma$ field---in other cases, lower degree terms will show up. 
With this notation, we can write the effective Lagrangian as
\[
L_\mr{eff}^n(x^n) 
= \sum_k \frac{(-1)^k}{r^nk!} \sum_{g\in G^{n,k}_\mr{c}} c^n_\mr{eff}(g) x^n(g),\quad
c^n_\mr{eff}(g) 
	= \lim_{m\rightarrow\infty} \frac1{r^{mk}}  \sum_{P^nh = g} c^{n+m}(h) \tilde \chi_\mr{c}^n(h), 
\]
where $c^n_{ij}$ equals $\frac{\alpha_0+1}{\alpha_0+r^n}$ if $i=j$ and 0 otherwise. Two problems must be studied: existence of these limits, and convergence of the resulting power series. 

Given the vanishing of non-diagonal $c^n_{ij}$'s, non-zero contributions to the effective Lagrangian come from graphs containing only loops, and those can only be coarsely connected if all the loops belong to the same coarse region. Thus, the effective Lagrangian is given by a density:
\begin{align*}
L^n_\mr{eff}(x^n) &= \sum_k \frac{(-1)^k}{r^nk!} \sum_{\text{loops } e\in E^n} c_\mr{eff}^n(e^k) x^n(e)^k = \sum_k \frac{(-1)^k}{r^nk!} \sum_{i} c_\mr{eff}^n\bigl((p_i\lo,p_i)^k\bigr) x_i^{2k} \\
	&= \frac1{r^n} \sum_i  \left(\lim_{m\rightarrow\infty} \tilde{\mc L}^{n,m}_\mr{eff} \right) x_i^{2k}\Lo, 
\end{align*}
with 
\[
\tilde{\mc L}^{n,m}_\mr{eff} = \sum_k \frac{(-1)^k}{r^{n(k-1)}k!} \underbrace{ \frac1{r^{km}} \sum_{P^nh = (p_i,p_i)^k} c^{n+m}(h) \tilde \chi_\mr{c}^n(h). }_{\tilde{\mc L}^{n,m}_{\mr{eff},k}}
\]
Let us compute: setting $e=(p_i\lo,p_i)$ and choosing a total order $\prec$ on $E^{n+m}$,
\begin{align*}
\tilde{\mc L}_{\mr{eff},k}^{n,m} 
	&= \frac1{r^{km}} \sum_{\substack{f_1\dots f_k \text{ loops} \\ \text{with } P^n(f_i)=e}} c^{n+m}(f_1\cdots f_k) \tilde \chi_\mr{c}^n(f_1\cdots f_k) \\
	&= \frac1{r^{km}} \biggl\{ \sum_{f_1} c^{n+m}(f_1^k) \tilde \chi_\mr{c}^n(f_1^k)  + \sum_{f_1\prec f_2}\sum_{\substack{k_1+k_2=k \\ k_1,k_2\neq 0}}  {k\choose k_1\ k_2} c^{n+m}(f_1^{k_1}f_2^{k_2}) \tilde \chi_\mr{c}^n(f_1^{k_1} f_2^{k_2})  \\
	&\pushright{ \biggl.{} +\cdots +  \sum_{f_1\prec\cdots\prec f_k} k! c^{n+m}(f_1\cdots f_k) \tilde \chi_\mr{c}^n(f_1\cdots f_k) \biggr\} } \\
	&= \frac1{r^{km}} \biggl\{ r^m c^{n+m}(f_1^k) \tilde\chi_\mr{c}^n(f_1^k) + {r^m\choose 2} \sum_{\substack{k_1+k_2=k \\ k_1,k_2\neq 0}} {k\choose k_1\ k_2} c^{n+m}(f_1^{k_1} f_2^{k_2}) \tilde\chi_\mr{c}^n(f_1^{k_1} f_2^{k_2})  \\
	&\pushright{ \biggl.{}+\cdots + {r^m\choose k} k! c^{n+m}(f_1\cdots f_k) \tilde \chi_\mr{c}^n(f_1\cdots f_k) \biggr\} } \\
	&= \sum_{\ell=1}^k \underbrace{ \sum_{\substack{k_1+\cdots+k_\ell=k \\ k_i\neq 0}} \frac1{r^{m\ell}} {r^m\choose\ell} {k\choose k_1\ \cdots\ k_\ell} }_{ \mr{I}_{m,\ell} } \cdot \underbrace{ \frac{c^{n+m}(f_1^{k_1}\cdots f_\ell^{k_\ell})}{r^{m(k-\ell)}} \tilde \chi_\mr{c}^n(f_1^{k_1}\cdots f_\ell^{k_\ell}) }_{ \mr{II}_{m,\ell} }.
\end{align*}
Now, we have
\[
\sum_{\ell=1}^k \sum_{\substack{k_1+\cdots+k_\ell=k \\ k_i\neq 0}} \frac1{\ell!} {k\choose k_1\ \cdots\ k_\ell} = \sum_{\ell=1}^k {k\brace \ell} = p(k),
\]
where ${k\brace\ell}$ is the Stirling partition number, counting the ways of partitioning a set of $k$ elements into $\ell$ subsets, and $p(k)\in\N$ is the number of partitions of $k$. Thus, 
\[
\sum_{\ell=1}^k \lim_{m\rightarrow\infty} \mr{I}_{m,\ell} = p(k) \sim \frac1{4k\sqrt 3} \exp\left(\pi \sqrt{\frac{2k}3}\right).
\]
\begin{prop}
\[
\lim_{m\rightarrow\infty} \mr{II}_{m,\ell} \lesssim \frac{k!}{ 2\left( r^{2n}(\alpha_n)_2\log 2 \right)^k.}
\]
\end{prop}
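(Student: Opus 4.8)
The plan is to compute $\lim_{m\to\infty}\mr{II}_{m,\ell}$ in closed form and then bound the combinatorial sum that remains.

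First I would feed in the explicit $\Gamma$-field formulas. By \autoref{expectation of monomials}, for a graph $h=f_1^{b_1}\cdots f_\ell^{b_\ell}$ built from loops $f_1,\dots,f_\ell$ of resolution $n+m$ all lying above one fixed coarse loop $e$ at site $i$, one has $\chi^n(h)=r^{2m|h|}\,(\alpha_n)_{2|h|}^{-1}\,(\alpha_{n+m})_{2b_1}\cdots(\alpha_{n+m})_{2b_\ell}\,x_i^{2|h|}$ with $|h|=b_1+\cdots+b_\ell$, hence $\tilde\chi^n(h)=r^{2m|h|}(\alpha_n)_{2|h|}^{-1}\prod_j(\alpha_{n+m})_{2b_j}$; moreover $c^{n+m}(f_1^{k_1}\cdots f_\ell^{k_\ell})=\bigl(\frac{\alpha_0+1}{\alpha_0+r^{n+m}}\bigr)^{k}$. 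Substituting $\tilde\chi^n$ into the Möbius-type definition of $\tilde\chi^n_{\mr c}$ expresses $\mr{II}_{m,\ell}$, for a fixed composition $k_1+\cdots+k_\ell=k$ with $k_j\ge1$, as a sum over partitions of the $k$-element edge multiset of $g=f_1^{k_1}\cdots f_\ell^{k_\ell}$.

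Next I would take $m\to\infty$ by power counting the factors of $r^m$. Since $\alpha_{n+m}=\alpha_0r^{-(n+m)}\to0$ we have $(\alpha_{n+m})_{2b}=\alpha_0r^{-(n+m)}(2b-1)!\,\bigl(1+O(r^{-m})\bigr)$ for $b\ge1$, and $c^{n+m}(g)=(\alpha_0+1)^kr^{-(n+m)k}\bigl(1+O(r^{-m})\bigr)$. Adding up the exponents of $r^m$ contributed by $\tilde\chi^n$ of each block, by the Pochhammer factors (one $r^{-m}$ per distinct loop occurring in a block), by $c^{n+m}(g)$, and by the explicit $r^{-m(k-\ell)}$, the net power attached to a partition $\pi'$ of the edge multiset comes out to be $\ell$ minus the total number of (block, distinct-loop-in-it) incidences, which is $\le0$ with equality precisely when all copies of each $f_j$ lie in a single block. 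Hence every other partition contributes $O(r^{-m})$, and in the limit the edge partitions collapse onto ordinary partitions $\pi$ of $\{1,\dots,\ell\}$. Writing $\kappa_B=\sum_{j\in B}k_j$ and collecting constants, this yields
\[
\lim_{m\to\infty}\mr{II}_{m,\ell}=(\alpha_0+1)^k\,\alpha_0^{\ell}\,r^{-n(k+\ell)}\,\left(\prod_{j=1}^{\ell}(2k_j-1)!\right)\,\Sigma,
\]
where $\Sigma$ is the alternating cumulant sum
\[
\Sigma:=\sum_{\pi}(-1)^{|\pi|-1}(|\pi|-1)!\prod_{B\in\pi}\frac{1}{(\alpha_n)_{2\kappa_B}},\qquad \pi\ \text{ranging over partitions of}\ \{1,\dots,\ell\}.
\]

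The last and hardest step is to estimate $\Sigma$. The obstacle is that $\Sigma$ is an alternating sum, so a term-by-term bound is far too lossy; one must exploit its cumulant structure. I would use generating functions: grouping the partitions of $\{1,\dots,\ell\}$ by their number of blocks $p$ and invoking the exponential formula, the Möbius weights $(-1)^{p-1}(p-1)!$ combine with the $1/p!$ coming from the block count into the alternating harmonic series $\sum_p(-1)^{p-1}/p$, which is the source of the constant $\log 2$. To control the per-block factors I would use the elementary inequality $(\alpha_n)_{2\kappa}\ge(\alpha_n)_2(2\kappa-1)!$ (each factor $\alpha_n+i\ge i$ for $i\ge1$) together with $a!\,b!\le(a+b)!$ to cancel $\prod_j(2k_j-1)!$ against the factors $(2\kappa_B-1)!$; the remaining arithmetic, simplified by the identity $r^{2n}(\alpha_n)_2=\alpha_0(\alpha_0+r^n)$, then produces the claimed bound $k!\big/2\bigl(r^{2n}(\alpha_n)_2\log2\bigr)^k$. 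I expect the combinatorial bookkeeping — making the resummation land exactly on $\log2$ and keeping the powers of $r^{2n}$ and $(\alpha_n)_2$ exact — to be the genuine difficulty, the analytic ingredients (the power counting and the Pochhammer inequality) being essentially routine.
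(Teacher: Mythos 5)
Your first two steps---the explicit $\Gamma$-field formulas, the power counting in $r^m$, and the conclusion that only those edge-partitions survive in which all copies of each fine loop $f_j$ sit in a single block---reproduce the paper's computation (its ``Case 1 / Case 2'' dichotomy), and you even track the $(2k_j-1)!$ factors more carefully than the paper does. The gap is in the third step, which you yourself flag as ``the genuine difficulty'' and do not carry out; moreover, the route you propose for it rests on a misdiagnosis. You assert that a term-by-term bound on the alternating sum is ``far too lossy'' and that the constant $\log 2$ must be extracted from the cancellation via $\sum_p(-1)^{p-1}/p$. Neither is the case. The paper bounds each partition term by a quantity independent of $m$ (this is where $(\alpha_n)_2^{-k}$ and $r^{-2nk}$ enter), replaces the M\"obius weight $(-1)^{p-1}(p-1)!$ by $p!$, and counts the partitions of the $k$-element edge multiset into $p$ blocks by the Stirling number ${k\brace p}$, so that the entire sum is controlled by the Fubini (ordered Bell) numbers,
\[
\sum_{p=1}^k p!\,{k\brace p}\ \sim\ \frac{k!}{2(\log 2)^{k+1}}.
\]
The $\log 2$ is the location of the dominant singularity of $1/(2-\e^x)$; it survives taking absolute values, and is not the value of the alternating harmonic series. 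A resummation designed to ``land on $\log 2$'' through sign cancellation is looking for the constant in the wrong place and would, if anything, produce a single factor of $\log 2$ rather than the needed $(\log 2)^{-k}$.

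There is also an index mismatch you would have to resolve: after your $m\rightarrow\infty$ limit, $\Sigma$ runs over partitions of $\set{1,\dots,\ell}$, so a partition count there naturally yields $\ell$-indexed quantities ($\ell!$, $(\log 2)^{-\ell}$, $(\alpha_n)_2^{-\abs{\pi}}$), and you must still convert $\ell!\prod_j(2k_j-1)!$ together with the block factors $(\alpha_n)_{2\kappa_B}^{-1}$ into the $k$-indexed bound $k!\,\bigl(r^{2n}(\alpha_n)_2\log 2\bigr)^{-k}$ of the statement. The inequalities you list, $(\alpha_n)_{2\kappa}\geq(\alpha_n)_2(2\kappa-1)!$ and $a!\,b!\leq(a+b)!$, are the right raw material for absorbing the factorials, but that conversion is precisely the bookkeeping you have deferred, and it is the entire content of the estimate. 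The paper sidesteps it by performing the count over the $k$ fine edges before passing to the limit, which delivers the $k$-indexed Fubini asymptotics directly.
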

\begin{proof}
Write
\begin{align*}
&\frac{c(f_1^{k_1}\cdots f_\ell^{k_\ell})}{r^{m(k-\ell)}} \EE[\mathrm c]{x(f_1^{k_1}\cdots f_\ell^{k_\ell})}[x^n] \\
	&\quad= \frac1{r^{2nk}} \frac{(\alpha_0)_2^k}{r^{mk} (\alpha_{n+m})_2^k} \sum_\text{partitions} (p-1)! \frac{ \EE{x(g_1)}[x^n]\cdots \EE{x(g_p)}[x^n] }{r^{m(2k-\ell)}}.
\end{align*}
Next, observe that
\[
\lim_{m\rightarrow\infty} \frac{1}{ r^{mk}(\alpha_{n+m})_2^k } = \lim_{m\rightarrow\infty} \frac{1}{\alpha_n^k(1+\alpha_{n+m})^k} = \alpha_n^{-k}\LO.
\]
As for the sum over partitions, there are two possibilities:
\begin{description}
	\item[Case 1.] The partition does not subdivide any of the connected components of $g$, namely the subgraphs $f_i^{k_i}$\LO. Then,
\begin{align*}
&\frac{r^{m\ell}}{r^{2mk}} \EE{x(g_1)}[x^n]\cdots \EE{x(g_p)}[x^n] \\
	&\quad= \frac{ r^{m\ell}(\alpha_{n+m})_2^{k_1} \cdots (\alpha_{n+m})_2^{k_\ell} }{ (\alpha_n)_{2\abs{g_1}} \cdots (\alpha_n)_{2\abs{g_p}} } 
	\leq \frac{\alpha_n^\ell (1+\alpha_{n+m})^k}{ (\alpha_n)_2^k } \rightarrow \alpha_n^{\ell-k} (1+\alpha_n)^{-k}.  
\end{align*}
	\item[Case 2.] The partition does subdivide some connected components, say factoring $f_i^{k_i} = f_i^{k_i'} f_i^{k_i''}$ for $i=1\dots \ell'\leq \ell$. Then, similarily,
\[ 
\frac{r^{m\ell}}{r^{2mk}} \EE{x(g_1)}[x^n]\cdots \EE{x(g_p)}[x^n] 
	\leq \frac{\alpha_n^\ell \alpha_{n+m}^{\ell'} (1+\alpha_{n+m})^k }{ (\alpha_n)_2^k } \rightarrow 0.
\] 
\end{description}
The claim follows, because 
\[
\sum_{p=1}^k (p-1)! {k\brace p} \leq \sum_{p=1}^k p! {k\brace p} \sim k!/2(\log 2)^{k+1}. \qedhere
\]
\end{proof}

Thus, the power series $\mc L_\mathrm{eff}^n[\lambda]$ has a strictly positive radius of convergence which, moreover, increases with $n$. 

\subsection{Divergence of the $\Gamma$ field effective kinetic energy}

Now we take
\[ 
L_\mr{ren}^n(x^n) 
	= T_\mr{ren}^n(x^n) = \frac1{r^n} \sum c^n_{ii'} x_ix_{i'}\lo,\quad c_{ii'}^n = \left\{\begin{aligned} &\frac{d\alpha_n\eps^{-n}}{1+\alpha_n}& &i=i', \\ &-\eps^{-n}& &i,i'\text{ nearest neighbours,} \\ &0& &\text{otherwise.} \end{aligned}\right.
\] 
We will see that there are divergences in the resulting effective Lagrangian. 

We take a first step towards identifying potentially divergent contributions in $c_\mr{eff}^n(g)$ by noting that we can focus on 
\[
b_\mr{eff}^n(g) 
	= \lim_{m\rightarrow\infty} \frac1{r^{mk}} \sum_{P^nh=g} c^{n+m}(h) \tilde \chi^n(h).
\]
Indeed, we have the following.
\begin{prop}
If the $b_\mr{eff}^n(g)$'s exist, then the $c_\mr{eff}^n(g)$'s exist too, and one has
\[
c^n_\mr{eff}(g)
	= b_\mr{eff}^n(g) - \sum_{e\leq h< g} c_\mr{eff}^n(h) b_\mr{eff}^n(g/h). 
\]
\end{prop}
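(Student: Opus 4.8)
The plan is to mimic the proof of Proposition~\ref{Ec recurrence}, transporting the recursive relation between $\chi^n$ and $\chi^n_\mr{c}$ to the level of the generating coefficients $b^n_\mr{eff}$ and $c^n_\mr{eff}$. The starting point is the identity from Proposition~\ref{Ec recurrence}, namely that for any fixed edge $e\leq h$ one has $\chi^n_\mr{c}(h) = \chi^n(h) - \sum_{e\leq h'<h}\chi^n_\mr{c}(h')\chi^n(h/h')$. Writing $\chi^n(h) = \tilde\chi^n(h)x^n(P^nh)$ and $\chi^n_\mr{c}(h) = \tilde\chi^n_\mr{c}(h)x^n(P^nh)$ as in the preceding subsection (this is where the simple monomial structure of the $\Gamma$-field conditional expectation is used), and noting that $x^n(P^nh) = x^n(P^nh')\,x^n(P^n(h/h'))$ because $P^n$ is multiplicative on graphs, the same recursion holds verbatim for the tilded quantities: $\tilde\chi^n_\mr{c}(h) = \tilde\chi^n(h) - \sum_{e\leq h'<h}\tilde\chi^n_\mr{c}(h')\tilde\chi^n(h/h')$.

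\textbf{Key steps.} First I would fix $g\in G^{n,k}_\mr{c}$ with a chosen edge $e\leq g$, and for each refinement level $m$ organize the sum $\frac1{r^{mk}}\sum_{P^nh=g}c^{n+m}(h)\tilde\chi^n_\mr{c}(h)$ by substituting the tilded recursion above. The point is that the constraint $P^nh=g$ together with the factorization $h = h'\cdot(h/h')$ matches up, after summing over all ways of splitting, with: a choice of sub-graph $h\leq \bar h < g$ at resolution $n$ (where $\bar h := P^nh'$) carrying $e$, a refinement $h'$ of $\bar h$, and a refinement of $g/\bar h$; and $c^{n+m}$ factors as $c^{n+m}(h') c^{n+m}(h/h')$ since it too is multiplicative. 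Crucially the size decomposes as $k = |\bar h| + |g/\bar h|$, so the normalizing power $r^{mk}$ splits as $r^{m|\bar h|}\cdot r^{m|g/\bar h|}$ and distributes correctly over the two factors. Passing to the limit $m\to\infty$, the $\bar h$-factor converges to $c^n_\mr{eff}(\bar h)$ (by the definition of $c^n_\mr{eff}$, applied to a graph of smaller size, for which existence is either the inductive hypothesis or — for $\bar h = g$ — exactly the term $b^n_\mr{eff}(g)$), and the $g/\bar h$-factor converges to $b^n_\mr{eff}(g/\bar h)$ by hypothesis. Collecting terms: the $\bar h = g$ piece gives $b^n_\mr{eff}(g)$ and the remaining pieces give $-\sum_{e\leq h<g}c^n_\mr{eff}(h)b^n_\mr{eff}(g/h)$, which is the claimed formula; and since the right-hand side is a finite sum of existing limits, the limit defining $c^n_\mr{eff}(g)$ exists, closing the induction on $k = |g|$.

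\textbf{Main obstacle.} The combinatorial bookkeeping is the delicate part: I need to verify that summing the recursion for $\tilde\chi^n_\mr{c}$ over $\{h : P^nh = g\}$ really does reorganize into an outer sum over coarse sub-graphs $\bar h$ of a sum over compatible refinements, with the multiplicities matching. This is essentially the same reindexing that underlies Proposition~\ref{Ec recurrence} and its first corollary, but now carried out \emph{simultaneously} with the refinement sum $P^nh=g$ and the $c^{n+m}$-weights, so one must check that these two layers of summation commute and that no over- or under-counting of factorizations occurs (in particular that the marked edge $e$ correctly pins down which coarse factor is the ``connected'' one at each stage). Once that reindexing is justified, the limit interchange is routine given that all the relevant limits are assumed or inductively known to exist. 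A secondary point to be careful about is the base case $|g| = 1$, where $g$ is a single loop, there are no proper sub-graphs, and the identity reduces to $c^n_\mr{eff}(e) = b^n_\mr{eff}(e)$, which holds trivially since $\chi^n_\mr{c}(h) = \chi^n(h)$ for a one-edge graph.
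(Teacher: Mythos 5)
Your proposal is correct and follows essentially the same route as the paper's proof: induction on the number of edges, substitution of the recursion from Proposition~\ref{Ec recurrence} (transported to the tilded quantities) into the defining sum, reindexing the fine factorizations as refinements of coarse factorizations using the multiplicativity of $c^{n+m}$ and the additivity of sizes in the normalization $r^{mk}$, and then passing to the limit term by term. The reindexing step you flag as the main obstacle is exactly the one the paper performs (and likewise does not belabor), so there is nothing substantive to add.
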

\begin{rk}
Of course, to establish divergence rigorously we need the reciprocal, but we choose to dispense with that for the time being.
\end{rk}
\begin{proof}
Let us do this by induction on the number of edges.
Let $g$ be a graph of size $k$ and $e$ an edge such that $ge$ is connected.
Want to prove that
\[
\lim_{m\rightarrow\infty} \frac1{r^{m(k+1)}} \sum_{P^nh=g} \sum_{P^nf=e} c^{n+m}(hf) \tilde\chi_\mr{c}(hf)
\]
exists. Using \autoref{Ec recurrence}, have that
\begin{align*}
&\frac1{r^{m(k+1)}} \sum_{P^nh=g} \sum_{P^nf=e} c^{n+m}(hf) \tilde \chi_\mr{c}(hf) \\
	&\quad= \frac1{r^{m(k+1)}} \sum_{P^nh=g} \sum_{P^nf=e} c^{n+m}(hf) \left( \tilde\chi(hf) - \sum_{f\leq h'<hf} \tilde \chi_\mr{c}(h') \tilde\chi(hf/h') \right), 
\end{align*}
and
\begin{align*}
&\lim_{m\rightarrow\infty} \frac1{r^{m(k+1)}} \sum_{P^nh=g} \sum_{P^nf=e}  \sum_{f\leq h'<hf} c^{n+m}(hf) \tilde \chi_\mr{c}(h') \tilde \chi(hf/h') \\
	&\quad= \lim_{m\rightarrow\infty} \frac1{r^{m(k+1)}} \sum_{e\leq g'<ge} \sum_{P^nh'=g'} \sum_{P^nh''=ge/g'} c^{n+m}(h'h'') \tilde \chi_\mr{c}(h') \tilde \chi(h'') \\
	&\quad= \sum_{e\leq g'<ge} \lim_{m\rightarrow\infty} \sum_{P^nh'=g'} \frac1{r^{m\abs{h'}}} c^{n+m}(h') \tilde \chi_\mr{c}(h') \sum_{P^nh''=ge/g'} \frac1{r^{m\abs{h''}}} \tilde \chi(h''). \qedhere
\end{align*}
\end{proof}

Let us focus on $b^n_\mr{eff}(e^k)$ with $e$ a loop over a fixed coarse site $i$ and $k=1,2$. 
Now, suppose that a choice of identification
\[
\Set{ ij | j=1,\dots, r^m } \cong \Lambda := \set{ 1,\dots, \eps^{-m} }^{d}
\]
respecting the geometry (i.e.\ such that neighouring sites are mapped to neighbouring sites) has been made, and define
\[
\begin{aligned} \sigma_\ell : D_\ell\subseteq \Lambda &\rightarrow \Lambda \\ \lambda &\mapsto (\lambda_1\lo,\dots,\lambda_\ell+1,\dots \lambda_d) \end{aligned}
\]
where $D_\ell = \Set{\lambda\in \Lambda | \lambda_\ell<\eps^{-m}}$. Let also 
\begin{align*}
\Lambda_d &= D_{1}\cap\cdots\cap D_{d} 
	= \Set{ \lambda\in \Lambda |(\forall\ell)\ a_\ell<\eps^{-m} } \\
\Lambda_{d-1} &= \Set{ \lambda\in \Lambda | (\exists!\ell)\ \lambda_\ell=\eps^{-m} } 
	= \Set{ \lambda\in \Lambda | \#\set{\ell| \lambda_\ell<\eps^{-m}} = d-1 } \\
&\,\ \vdots \\
\Lambda_1 &= 
	\Set{ \lambda\in \Lambda | \#\set{\ell|\lambda_\ell<\eps^{-m}}=1 } \\
\Lambda_0 &= \set{(\eps^{-m},\dots,\eps^{-m})},
\end{align*}
so that $\Lambda=\bigcup_{\ell=0}^d \Lambda_\ell$\lo. The idea here is to add together the contribution of a loop over $ij$ with that of the edges of the form $(ij,i\sigma_\ell(j))$, for this localizes the cancellations that occur. Let us exemplify this by revisiting the example of $c_\mr{eff}^n(e) = b_\mr{eff}^n(e)$, which was done when renormalizing the kinetic energy. We have---abusing notation, for one still has to take $\lim_{m\rightarrow\infty}$ in the RHS,
\begin{align*}
b^n_\mr{eff}(e)
	&= \frac1{r^m} \left\{ \sum_{j\in \Lambda_d} \left(c_{jj}^{n+m} \tilde \chi\bigl((j,j)\bigr) + \sum_{\ell=1}^d c_{j\sigma_\ell(j)}^{n+m} \tilde \chi\bigl( (j,\sigma_\ell(j)) \bigr) \right) + \cdots \right\} \\
	&= \frac1{r^m} \left\{ \sum_{j\in \Lambda_d} \left( \frac{d\alpha_{n+m}\eps^{-n-m}}{1+\alpha_{n+m}} r^{2m}\frac{(\alpha_{n+m})_2}{(\alpha_n)_2} - d\eps^{-n-m}r^{2m}\frac{\alpha_{n+m}^2}{(\alpha_n)_2} \right) + \cdots \right\} \\
	&= \frac1{r^m} \frac{\alpha_n}{1+\alpha_n} \left\{ \sum_{j\in \Lambda_d} \eps^{-n-m}(d-d) + \sum_{j\in \Lambda_{d-1}} \eps^{-n-m}\bigl(d-(d-1)\bigr) + \cdots\right\} \\
	&= \frac1{r^m} \frac{\alpha_n}{1+\alpha_n} \left\{ \sum_{j\in \Lambda_{d-1}} \eps^{-n-m} + \sum_{j\in \Lambda_{d-2}} 2\eps^{-n-m} + \cdots \right\}
\end{align*}
Thus, knowing that $\Abs{\Lambda_\ell} = {d\choose\ell} (\eps^{-m}-1)^\ell = {d\choose\ell} r^m\eps^{m(d-\ell)}(1 + O(\eps^m))$, only the first term survives, giving the expected result. 

The calculation above was done in \autoref{kinetic energy renormalization} in a way that can be diagrammatically represented as:
\[ 
b^n_\mr{eff}(e)
= \loopdiagram{} + \branchdiagram*{}[] 
	= \frac1{r^m} \sum_{j\in \Lambda} \loopdiagram{j} + \frac1{r^m}\sum_{\ell=1}^d \sum_{j\in \Lambda_\ell} \branchdiagram*{j}
\] 
whereas now we are performing the sum as follows:
\begin{align*}
b^n_\mr{eff}(e)
	&= \frac1{r^m} \sum_{j\in \Lambda_d}\left( \loopdiagram{j} + \branchdiagram{j}[1,,d] \right) + \frac1{r^m}\sum_{j\in \Lambda_{d-1}}\left( \loopdiagram{j} + \branchdiagram{j}[1,d-1] \right) + \cdots \\
	&= \left(\loopdiagram{} + \branchdiagram{}[1,,d]\right) + \left(\loopdiagram{} + \branchdiagram{}[1,d-1]\right) + \cdots + \left(\loopdiagram{} + \branchdiagram*{}\right) + \loopdiagram{r^m} \\
	&= \left( \loopdiagram{} +\branchdiagram{}[1,d-1] \right) + O(\eps^m).
\end{align*}
Here, diagrams with vertices which are not fully specified represent sums over all their possible concrete realizations, and the vertex $r^m$ is the only element of $\Lambda_0$\lo. Some words of caution regarding the interpretation of this picture:
\begin{itemize}
	\item A branching diagram $\branchdiagram{}[,]$ must not be interpreted as one term with several propagators, but as a sum of several terms with a single propagator. Thus, rather than representing a situation with more than two vertices, it is representing the sum over all possible situations with two vertices, with the right ones being (all and only) those reachable by application of a $\sigma_\ell$ to the left one, which we will refer to as the \emph{source.} 
	\item The parenthesis in the expression $\left( \loopdiagram{}+\branchdiagram{}[] \right)$ does not just indicate operation precedence: it is also implied that the vertex in the loop is the same as the source vertex in the branching. 
\end{itemize}
It will be convenient to write this even more succintely, as follows:
\begin{align*}
\diagram{\bouquet{}[1,\ell]}
	&= \left(\loopdiagram{} + \branchdiagram{}[1,\ell]\right) 
	= \frac1{r^m}\sum_{j\in \Lambda_\ell} \left( \loopdiagram{j} + \branchdiagram{j}[1,\ell] \right) \\
	&= \frac1{r^m} {d\choose\ell}(\eps^{-m}-1)^\ell \frac{\alpha_n\eps^{-n-m}}{1+\alpha_n}(d-\ell),
\end{align*}
which converges to 0 unless $\ell=d-1$. We will call this a \emph{bouquet} diagram, whose \emph{source} is the site over which the loop stands.

When doing the same with bouquets of several loops, internal propagators can produce finely connected non-trivial subdiagrams that must be dealt with separately. Thus, for instance, for computing $b^n_\mr{eff}(e^2)$
we group the corresponding sum over fine diagrams according to the following partition of the set of possible choices:
\begin{align*}
&\Set{ (f,f') | P^n(ff') = \eightdiagram } \\
	&\quad= \Set{ (f,f') | s(f)=s(f') } \\
	&\qquad{} \cup \Set{ (f,f') | (\exists \ell)\  s(f')=\sigma_\ell\bigl( s(f) \bigr) } \cup \Set{ (f,f') | (\exists \ell)\ s(f)=\sigma_\ell\bigl( s(f') \bigr) } \\
	&\qquad{}\cup \Set{ (f,f') | (\exists\ell,\ell')\ \sigma_\ell\bigl(s(f)\bigr) = \sigma_{\ell'}\bigl( s(f')\bigr) } \\
	&\qquad{}\cup \Set{\text{the rest}}.
\end{align*}
The resulting sum can be diagrammatically expressed as follows:
\[ 
b^n_\mr{eff}(e^2)
	= \bouquetdiagram{2}[,]
	+ 2\,\diagram{\bouquet{1}[,] \bouquet(\propagatorlength,0){1}[,]} 
	+ \diagram{\bouquet{1}[,] 	\bouquet(\propagatorlength,-\propagatorlength){1}[,]}
	+ \bouquetdiagram{1}[,]\ \bouquetdiagram{1}[,]
\] 
where, now, the label of the source vertex in a bouquet diagram indicates the number of branches that must be choosen (with repetitions allowed). 
Now, for diagrams having one, as opposed to two, connected components, the cancellation between positive and negative terms that prevented $b^n_\mr{eff}(e)$ from diverging will be spoiled. Thus, one should not expect to have convergence unless each connected diagram is convergent on its own. This, however, is not the case: 
take, for instance,
\begin{align*}
\bouquetdiagram{2}[,]
	&= \frac1{r^{2m}} \sum_{\ell=0}^d\ \bouquetdiagram{2}[1,\ell] \\
	&= \frac1{r^{2m}} \sum_{\ell=0}^d\sum_{j\in \Lambda_\ell}\left( \eightdiagram[j] + \ell\,\tadpolediagram[j] + \ell(\ell-1)\elldiagram[j] + \ell\ \eyediagram[j]\right).
\end{align*}
We assess the asymptotics of this expression as follows:
\begin{itemize}
	\item As noted above, a sum over $\Lambda_\ell$ has $O(r^m\eps^{m(d-\ell)})$ terms.
	\item Each edge constributes a factor of $\eps^{-m}$, with loops contributing also an extra $r^{-m}$.
	\item Inspecting the explicit expression for $\EE{(x_{ij_1})^{k_1}\cdots (x_{ij_p})^{k_p}}[x^n]$ obtained in \autoref{cond_exp_lem}, we see that each vertex $j$ contributes a factor $r^{m(\deg j-1)}$, where $\deg j$ is the number of incident edges (counted with multiplicity, i.e.\ loops incide twice).
\end{itemize}
Thus, for instance, the ``eye'' diagram is divergent, with
\[
\frac1{r^{2m}} \sum_{j\in\Lambda_\ell} \eyediagram[j]
\]
being of order $r^m\eps^{m(d-\ell)} \eps^{-2m} = \eps^{-m(2+\ell)}$.
Worse divergences appear as one increases the number of loops in the bouquet. Thus, it seems unlikely that the theory can be renormalized, i.e.\ that a finite number of higher-order counterterms can render the effective kinetic energy finite.

\providecommand{\bysame}{\leavevmode\hbox to3em{\hrulefill}\thinspace}
\providecommand{\MR}{\relax\ifhmode\unskip\space\fi MR }
\providecommand{\MRhref}[2]{%
  \href{http://www.ams.org/mathscinet-getitem?mr=#1}{#2}
}
\providecommand{\href}[2]{#2}

\end{document}